\newtheorem{lemma}{Lemma}
\newtheorem{definition}{Definition}
\newtheorem{theorem}{Theorem}
\begin{document}

\title{Dynamic Task Offloading and Resource Allocation for Ultra-Reliable Low-Latency Edge Computing}

\author{
\IEEEauthorblockN{Chen-Feng Liu,~\IEEEmembership{Student Member,~IEEE,} 
Mehdi Bennis,~\IEEEmembership{Senior Member,~IEEE,} \\M\'{e}rouane Debbah,~\IEEEmembership{Fellow,~IEEE,} and H. Vincent Poor,~\IEEEmembership{Fellow,~IEEE}
\\\emph{(Invited Paper)}
}

%\IEEEspecialpapernotice{(Invited Paper)}

\thanks{This work was supported  in part  by the Academy of Finland project CARMA, in part by the Academy of Finland project MISSION, in part by the Academy of Finland project SMARTER, in part  by the INFOTECH project NOOR, in part by the Nokia Bell-Labs project FOGGY, in part by the Nokia Foundation, and in part by the U.S. National Science Foundation under Grant CNS-1702808. This paper was presented in part at the IEEE Global Communications Conference Workshops, Singapore, December 2017 \cite{GC17}.}
\thanks{C.-F. Liu and M. Bennis are with the Centre for Wireless Communications, University of Oulu, 90014 Oulu, Finland (e-mail: chen-feng.liu@oulu.fi; mehdi.bennis@oulu.fi).}
\thanks{M. Debbah is with the Large Networks and System Group, CentraleSup\'{e}lec, Universit\'{e} Paris–Saclay, 91192 Gif-sur-Yvette, France, and also with the Mathematical and Algorithmic Sciences Laboratory, Huawei France Research and Development, 92100 Paris, France (e-mail: merouane.debbah@huawei.com).}
\thanks{H. V. Poor is with the Department of Electrical Engineering, Princeton University, Princeton, NJ 08544, USA (e-mail: poor@princeton.edu).}
}

\maketitle

\begin{abstract}
To overcome devices' limitations in performing computation-intense applications, mobile edge computing (MEC) enables users to offload tasks to proximal MEC servers for faster task computation. However, current MEC system design is based on average-based metrics, which fails to account for the ultra-reliable low-latency requirements in mission-critical applications. To tackle this, this paper proposes a new system design, where probabilistic and statistical constraints are imposed on task queue lengths, by applying \emph{extreme value theory}. The aim is to minimize users' power consumption while trading off the allocated resources for local computation and task offloading. Due to wireless channel dynamics, users are re-associated to MEC servers in order to offload tasks using higher rates or accessing proximal servers. In this regard, a user-server association policy is proposed, taking into account the channel quality as well as the servers' computation capabilities and workloads. By marrying tools from Lyapunov optimization and matching theory, a two-timescale mechanism is proposed, where a user-server association is solved in the long timescale while a dynamic task offloading and resource allocation policy is executed in the short timescale. Simulation results corroborate the effectiveness of the proposed approach by guaranteeing highly-reliable task computation and lower delay performance, compared to several baselines.
\end{abstract}
\begin{IEEEkeywords}
5G and beyond, mobile edge computing (MEC), fog networking and computing, ultra-reliable low latency communications (URLLC), extreme value theory. 
\end{IEEEkeywords}

\section{Introduction}\label{Sec: Intro}

\IEEEPARstart{M}{otivated} by the surging traffic demands spurred by online video and Internet-of-things (IoT) applications, including machine type and mission-critical communication (e.g., augmented/virtual reality (AR/VR) and drones),
mobile edge computing (MEC)/fog computing  are emerging technologies that distribute computations, communication, control, and storage at the network edge \cite{FogIoT,MungChiangFog,FogSurveyAccess,fog_survey,Fog_compare18}.
When executing the computation-intensive applications at mobile devices, the performance and user's quality of experience are significantly affected by the device's limited computation capability.
Additionally, intensive computations are energy-consuming which severely shortens the lifetime of  battery-limited  devices.  To address the computation and energy issues, mobile devices can wirelessly offload  their tasks  to proximal MEC servers. On the other hand, offloading  tasks incurs additional latency which cannot be overlooked and should be taken into account in the system design. Hence, the energy-delay tradeoff has received significant attention and has been studied in various MEC systems \cite{JSAC_queue,Kim18,TaskSplitting,FogPower,Mao_TWC17,EE_stable_GC17,Xu17,Sun17,KB_TWC18,Avg_delay_ICC15, Gilsoo_ICC17,FanLett18,FanJIoT18,FanCog18,Delay_estimation, Zhang_IoTJ17}.

\subsection{Related Work}\label{Sec: related work}
In \cite{JSAC_queue}, Kwak {\it et al.}~focused on an energy  minimization problem for local  computation and task offloading in a single-user MEC system.
The authors further studied a multi-user system, which takes into account both the energy cost and monetary cost of task offloading \cite{Kim18}. Therein, the cost-delay tradeoff was investigated in terms of competition and cooperation among users and offloading service provider.
Additionally,  the work  \cite{TaskSplitting} considered the single-user system and  assumed that the mobile device is endowed with a multi-core central process unit (CPU) to compute different applications simultaneously. In order to stabilize all task queues at the mobile device and MEC server, the dynamic task offloading and resource allocation policies were proposed by utilizing Lyapunov stochastic optimization in \cite{JSAC_queue,Kim18,TaskSplitting}. 
Assuming that the MEC server is equipped with multiple CPU cores to compute different users' offloaded tasks in parallel, Mao {\it et al.}~\cite{FogPower} studied a multi-user task offloading and bandwidth allocation problem. 
Subject to the stability of task queues, the energy-delay tradeoff was investigated using the  Lyapunov framework.
Extending the problem of \cite{FogPower}, the authors further took into account  the server's power consumption and  resource allocation in the system analysis \cite{Mao_TWC17}.
In  \cite{EE_stable_GC17}, a wireless powered MEC network was considered in which multiple users, without fixed energy supply, are wirelessly powered by a power beacon to carry out local computation and task offloading.
Taking into account the causality of the harvested energy, this work  \cite{EE_stable_GC17} aimed at maximizing energy efficiency subject to the stability of users' task queues.  Therein, the tradeoff between  energy efficiency and  average execution delay was analyzed by stochastic optimization.
Xu {\it et al.}~studied another energy harvesting MEC scenario, in which the edge servers are mainly powered by solar or wind energy, whereas the cloud server has a constant grid power \cite{Xu17}. Aiming at minimizing the long-term expected cost which incorporates the end-to-end delay and operational cost, the authors proposed a reinforcement learning-based resource provisioning and workload offloading (to the cloud) to edge servers.
Besides the transmission and computation delays, the work \cite{Sun17} took into account the cost (in terms of delay) of  handover  and  computation migration, due to user mobility, in an ultra-dense network. Taking into the long-term available energy constraint, an online energy-aware base station association and handover algorithm was proposed to minimize the  average end-to-end delay  by incorporating Lyapunov optimization and multi-armed bandit theory  \cite{Sun17}.
Ko {\it et al.}~\cite{KB_TWC18} analyzed the average latency performance, including communication delay and computation delay, of a large-scale spatially random MEC network.
Furthermore, an upper and a lower bound   \cite{KB_TWC18} on the average computation delay were derived by applying stochastic geometry and queuing theory.
A hybrid cloud-fog architecture was considered in \cite{Avg_delay_ICC15}. The delay-tolerable computation workloads, requested by the end users, are dispatched from the fog devices to the cloud servers when delay-sensitive workloads are computed at the fog devices. The studied problem was cast as a network-wide power minimization subject to an average delay requirement \cite{Avg_delay_ICC15}.
Focusing on the cloud-fog architecture, Lee {\it et al.}~\cite{Gilsoo_ICC17} studied a scenario in which a fog node distributes the offloaded tasks to the connected fog nodes and a remote cloud server for cooperative computation. To  address the uncertainty of the arrival of neighboring fog nodes, an online fog network formation algorithm was proposed such that  the maximal average latency among different computation nodes is minimized \cite{Gilsoo_ICC17}.
Considering a hierarchical cloudlet architecture,  Fan and Ansari  \cite{FanLett18} proposed a workload allocation (among different cloudlet tiers) and computational resource allocation approach in order to minimize the average response time of a task request.
The authors further focused on an edge computing-based IoT network in which each user equipment (UE) can run several IoT applications  \cite{FanJIoT18}. Therein, the objective was to minimize the average response time subject to the delay requirements of different applications.
In \cite{FanCog18}, a distributed workload balancing scheme was proposed for fog computing-empowered IoT networks. Based on the broadcast information of fog nodes' estimated traffic and computation loads, each IoT device locally chooses the associated fog node in order to reduce the average latency of its data flow.
In addition to the task uploading and computation phases, the work \cite{Delay_estimation} also accounted for the delay in the downlink phase, where the computed tasks are fed back to the users.
The objective was to minimize a cost function of the estimated average delays of the three phases.
The authors in \cite{Zhang_IoTJ17} studied a software-defined fog network, where the data service subscribers (DSSs) purchase the fog nodes' computation resources via the data service operators.
Modeling the average latency using queuing theory in the DSS's utility, a Stackelberg game and  a many-to-many matching game were incorporated to allocate fog nodes' resources to the DSSs  \cite{Zhang_IoTJ17}.

\subsection{Our Contribution}

\begin{table*}[t!]
\caption{Summary of Notations}\label{Tab: notations}
\centering
 \begin{tabular}{|m{1.1cm}|m{4cm}||m{1.1cm}|m{4cm}||m{1.1cm}|m{4cm}|}
\hline
{\bf Notation} &{\bf Definition}&{\bf Notation}&{\bf Definition}&{\bf Notation}&{\bf Definition}\\
\hline
  \hline
$A_i$  &  UE $i$'s task arrivals &  $A_i^{\rm L}$  & Split tasks for local computation   &  $A_{i}^{\rm O}$  &  Split tasks for offloading
\\  \hline
$A_{\rm unit}$& Unit task   &  $\tilde{A}_{i}^{\rm L}$ & Moving time-averaged split tasks for local computation &  $\tilde{A}_{i}^{\rm O}$ & Moving time-averaged split tasks for offloading
\\  \hline
 $d^{\rm L}_{i}$&Queue length bound for $Q_i^{\rm L}$& $d^{\rm O}_{i}$ &Queue length bound for $Q_i^{\rm O}$ & $d_{ji}$ & Queue length bound for $Z_{ji}$
 \\  \hline
    $f_i$&UE $i$'s CPU-cycle frequency &   $f_{ji}$&MEC server $j$'s allocated CPU-cycle frequency for UE $i$ &$f_{j}^{\max}$&MEC server $j$'s computation capability per CPU core
    \\  \hline
    $\mathbf{f}$ & Network-wide CPU-cycle frequency  vector & $\mathbf{f}_j$& MEC server $j$'s allocated CPU-cycle frequency  vector  &     $h_{ij}$&Channel gain between UE $i$ to MEC server $j$  
    \\\hline
   $I_{ij}$ & Aggregate interference to $h_{ij}$ & $L_i$&UE $i$'s required processing density  &    $\mathcal{L}$ & Lyapunov function   
   \\\hline
    $n$&Time frame index & $N_0$&Power spectral density of AWGN & $N_j$&MEC server $j$'s CPU core number   
    \\\hline
       $P_i$&UE $i$'s transmit power    & $P_i^{\max}$&UE $i$'s power budget  &  $\bar{P}_{i}^{\rm C}$&UE $i$'s long-term time-averaged
computation power   
  \\\hline
   $\bar{P}_{i}^{\rm T}$&UE $i$'s long-term time-averaged transmit power & $\mathbf{P}$ & Network-wide transmit power  vector & 	$\hat{\Pr}(I_{ij})$&Estimated distribution of $I_{ij}$	
    \\\hline
     $Q_i^{\rm L}$&UE $i$'s local-computation queue length & $Q_i^{\rm O}$&UE $i$'s task-offloading queue length    &   $Q_i^{\rm L, (X)}$&Virtual queue length for \eqref{Eq: GPD-Loc-mean}   
     \\     \hline
      $Q_i^{\rm L, (Y)}$ & Virtual queue length for \eqref{Eq: GPD-Loc-var}  &  $Q_i^{\rm O, (X)}$&Virtual queue length for \eqref{Eq: GPD-Off-mean}   &  $Q_i^{\rm O, (Y)}$&Virtual queue length  for  \eqref{Eq: GPD-Off-var}
   \\   \hline
     $Q_{ji}^{\rm (X)}$ & Virtual queue length for  \eqref{Eq: GPD-Server-mean} 
 &     $Q_{ji}^{\rm (Y)}$&Virtual queue length for  \eqref{Eq: GPD-Server-var} &  $Q_i^{\rm L, (Q)}$&Virtual queue length for  \eqref{Eq: remodel_Violation-Loc-Prob} 
      \\   \hline
       $Q_i^{\rm O, (Q)}$ &Virtual queue length for \eqref{Eq: remodel_Violation-Off-Prob} &      $Q_{ji}^{\rm  (Z)}$&Virtual queue length for \eqref{Eq: remodel_Violation-Server-Prob}   &      $\mathbf{Q}$ & Combined queue vector  
          \\   \hline
  $R_{ij}$&Transmission rate from UE $i$ to MEC server $j$  
      & $\tilde{R}_{ij}$&Moving time-averaged transmission rate from UE $i$ to MEC server $j$  &     $R^{\max}_{ij}$&Maximum offloading rate from UE $i$ to MEC server $j$ 
      \\\hline
      $\mathcal{S}$&Set of MEC servers & $S$&Number of MEC servers    &   $t$&Time slot index  
      \\\hline
$T_0$&Time frame length & $\mathcal{T}$&Time frame  &   $\mathcal{U}$ &Set of UEs
\\  \hline
  $U$&Number of UEs & $V$&Lyapunov optimization parameter  &    $W$&MEC server's dedicated bandwidth    
  \\\hline
  $X_{i}^{\rm L}$&Excess value of $Q_i^{\rm L}$ & $\bar{X}_{i}^{\rm L}$&Long-term time-averaged conditional expectation of ${X}_{i}^{\rm L}$  &   $X_{i}^{\rm O}$&Excess value of $Q_i^{\rm O}$    
  \\\hline
  $\bar{X}_{i}^{\rm O}$&Long-term time-averaged conditional expectation of ${X}_{i}^{\rm O}$ & $X_{ji}$&Excess value of $Z_{ji}$ &    $\bar{X}_{ji}$&Long-term time-averaged conditional expectation of ${X}_{ji}$    
  \\\hline
$Y_{i}^{\rm L}$&Square of $X_{i}^{\rm L}$ & $\bar{Y}_{i}^{\rm L}$&Long-term time-averaged conditional expectation of ${Y}_{i}^{\rm L}$  &   $Y_{i}^{\rm O}$&Square of $X_{i}^{\rm O}$    
\\  \hline
  $\bar{Y}_{i}^{\rm O}$&Long-term time-averaged conditional expectation of ${Y}_{i}^{\rm O}$ & $Y_{ji}$&Square of $X_{ji}$  &   $\bar{Y}_{ji}$&Long-term time-averaged conditional expectation of ${Y}_{ji}$    
  \\\hline
   $Z_{ji}$&MEC server $j$'s offloaded-task queue for UE $i$ &  $\beta_i^{\rm L}$ & Related weight of $Q_i^{\rm L}$  &     $\beta_i^{\rm O}$ & Related weight of $Q_i^{\rm O}$ 
   \\\hline
 $\beta_{ij}$&Related weight of $Z_{ji}$ &
  $\tilde{\beta}_i^{\rm O}$&Estimated average of   $\beta_i^{\rm O}$&
  $\tilde{\beta}_{j}$&Estimated
average of   $\beta_{ji}$\\
  \hline
 $\eta_{ij}$ & Association indicator between UE $i$ and MEC server $j$  &   $\boldsymbol{\eta}$ & Network-wide association  vector   &   $\epsilon^{\rm L}_{i}$&Tolerable bound violation probability for $Q_i^{\rm L}$\\
  \hline
$\epsilon^{\rm O}_{i}$&Tolerable bound violation probability for $Q_i^{\rm O}$   &    $\epsilon_{ji}$&Tolerable bound violation probability for $Z_{ji}$    &     $\kappa $&Computation power parameter\\
  \hline
$\lambda_{i}$&UE $i$'s average task arrival rate  & 
$\Psi_i(\eta)$&UE $i$'s utility under a matching $\eta$ &$\Psi_i(\eta)$ & MEC server $j$'s utility under a matching  $\eta$ \\
\hline
$\sigma_i^{\rm L}$&GPD scale parameter of $X_{i}^{\rm L}$  &  $\sigma_i^{\rm L, th}$&Threshold for $\sigma_i^{\rm L}$  &$\sigma_i^{\rm O, th}$&Threshold for the GPD scale parameter of $X_{i}^{\rm O}$  \\
  \hline
 $\sigma_{ji}^{\rm th}$&Threshold for the GPD scale parameter of $X_{ji}$ & $\tau$&Time slot length  &  $\xi_i^{\rm L}$&GPD shape parameter of $X_{i}^{\rm L}$\\
  \hline
 $\xi_i^{\rm L, th}$& Threshold for $\xi_i^{\rm L}$ & $\xi_i^{\rm O, th}$& Threshold for the GPD shape parameter of $X_{i}^{\rm O}$  & $\xi_{ji}^{\rm th}$& Threshold on the GPD shape parameter of $X_{ji}$\\
  \hline
 \end{tabular}
\end {table*}

While conventional communication networks were engineered to boost network capacity, little attention has been paid to reliability and latency performance. Indeed, ultra-reliable and low latency communication (URLLC) is one of the pillars for enabling 5G and is currently receiving significant attention in both academia and industry \cite{URLLC,MehdiURLLC,Qualcomm_URLLC}. 
Regarding  the existing MEC literature,  the vast majority considers the average delay as a performance metric or the quality-of-service requirement \cite{Xu17,Sun17,KB_TWC18,Avg_delay_ICC15, Gilsoo_ICC17,FanLett18,FanJIoT18,FanCog18,Delay_estimation, Zhang_IoTJ17}. In other words, these system designs focus on latency through the lens of the average.
In the works addressing the stochastic nature of the task arrival process \cite{JSAC_queue,Kim18,TaskSplitting,FogPower,Mao_TWC17,EE_stable_GC17}, their prime concern is how to maintain the mean rate stability of task queues,  i.e., ensuring a finite average queue length as time evolves \cite{Neely/Stochastic}.
However, merely focusing on the average-based performance is not sufficient to guarantee URLLC for mission-critical applications, which mandates a further examination in terms of  bound violation probability, high-order statistics, characterization of the extreme events with very low occurrence probabilities, and so forth \cite{MehdiURLLC}. 

The main contribution of this work is to propose a URLLC-centric task offloading and resource allocation framework, by taking into account the statistics of extreme queue length events.
We consider a multi-user MEC architecture with multiple servers having heterogeneous computation resources.
Due to the UE's  limited computation capability and the additional incurred latency during task offloading, the UEs need to 
smartly allocate resources  for local computation and the amount of tasks to offload via wireless transmission
 if the executed applications are latency-sensitive or mission-critical.
Since the queue value is implicitly related to delay, we treat the former as a delay metric in this work.
Motivated by the aforementioned drawbacks of average-based designs, we set a threshold for the queue length and impose a probabilistic requirement on the threshold deviation as a URLLC constraint. In order to model the event of threshold deviation, we characterize its statistics by invoking  {\it extreme value theory} \cite{EVT} and impose another URLLC constraint in terms of higher-order statistics. 
The  problem  is cast as a network-wide power minimization problem for task computation and offloading, subject to statistical URLLC constraints on the  threshold deviation and extreme queue length events.
Furthermore, we incorporate the UEs'  mobility feature and propose a two-timescale  UE-server association and task computation framework. In this regard, taking into account task queue state information,  servers' computation capabilities and workloads, co-channel interference, and URLLC constraints, we associate the UEs  with the MEC servers,  in a long timescale, by utilizing matching theory \cite{matching_theo}.
Then, given the associated MEC server, task offloading and resource allocation are performed in the short timescale. To this end, we leverage Lyapunov stochastic optimization \cite{Neely/Stochastic} to deal with the randomness of task arrivals,  wireless channels, and task queue values. 
Simulation results show that considering the statistics of the extreme queue length as a reliability measure,
the studied partially-offloading scheme includes more reliable task execution than the scheme without MEC servers and the fully-offloading scheme.
In contrast with the received signal strength (RSS)-based baseline, our proposed UE-server association approach achieves better delay performance for heterogeneous MEC server architectures. The performance enhancement is more remarkable in denser networks.

The remainder of this paper is organized as follows. The system model is first specified in Section \ref{Sec: system}.  Subsequently, we formulate the latency requirements, reliability constraints, and the studied optimization problem in Section \ref{Sec: problem}. In Section \ref{Sec: approach}, we detailedly specify the proposed UE-server association mechanism as well as the latency and reliability-aware task offloading and resource allocation framework. The network performance is  evaluated  numerically and discussed in Section \ref{Sec: results} which is followed by Section \ref{Sec: conclude} for conclusions. Furthermore, for the sake of readability, we list all notations in Table \ref{Tab: notations}. The meaning of the notations will be detailedly defined in the following sections.

\section{System Model}\label{Sec: system}

The considered MEC network consists of a set $\mathcal{U}$ of $U$ UEs and  a set $\mathcal{S}$ of $S$ MEC  servers. UEs  have computation capabilities to execute their own tasks locally.
However, due to the limited computation capabilities to execute computation-intense applications,  UEs can wirelessly offload their tasks to the MEC servers with an additional cost of communication latency.
 The MEC servers are equipped with  multi-core CPUs such that different UEs' offloaded tasks can be computed in parallel.
Additionally, the computation and communication timeline is slotted and indexed by $t\in\mathbb{N}$   in which each time slot, with the slot length  $\tau$, is consistent with the coherence block of the wireless channel. 
We further assume that  UEs are randomly distributed and moves continuously in the network, whereas the MEC servers are located in  fixed positions. Since the UE's geographic location keeps changing, the UE is incentivized to offload its tasks to a different server which is closer to the UE, provides a stronger computation capability, or has the lower workload than the currently associated one. In this regard, we consider a two-timescale UE-server association and task-offloading mechanism. Specifically, we group every successive  $T_0$ time slots as a time frame, which is indexed by $n\in\mathbb{Z}^{+}$ and denoted by $\mathcal{T}(n)=[(n-1)T_0,\cdots,nT_0-1]$. In the beginning of each time frame (i.e., the long/slow timescale), each UE is associated with an MEC server. 
Let $\eta_{ij}(n)\in\{0,1\}$ represent the UE-server association indicator in the $n$th time frame, in which $\eta_{ij}(n)=1$ indicates that UE $i$ can offload its tasks to server $j$ during time frame $n$. Otherwise, $\eta_{ij}(n)=0$.  We also assume that each UE can only offload its tasks to one MEC server at a time.  The UE-server association rule can be formulated as
\begin{equation}\label{Eq: UE-server association constraint}
\begin{cases}
\eta_{ij}(n)\in\{0,1\},&\forall\,i\in\mathcal{U},j\in\mathcal{S},
\\\sum\limits_{j\in\mathcal{S}}\eta_{ij}(n)=1,&\forall\, i\in\mathcal{U}.
\end{cases}
\end{equation}
Subsequently in each time slot, i.e., the short/fast timescale, within the $n$th frame, each UE dynamically offloads part of the tasks to the associated MEC server and computes the remaining tasks locally.
The network architecture and timeline of the considered MEC network are shown in Fig.~\ref{Fig: System model}.
\begin{figure}[t]
\centering
\subfigure[System architecture.]{\includegraphics[width=\columnwidth]{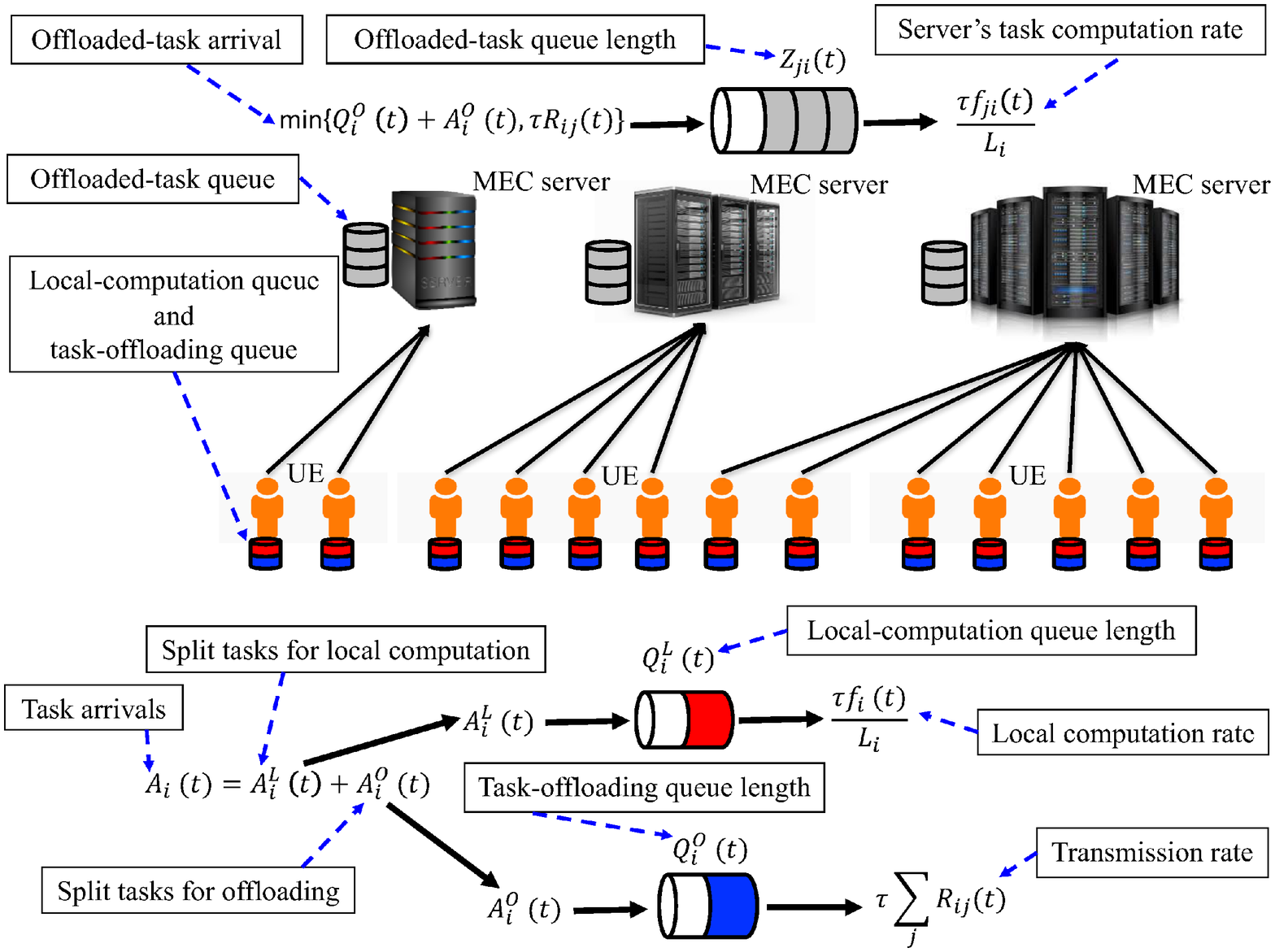}}
\subfigure[Timeline of the two-timescale mechanism.]{\includegraphics[width=\columnwidth]{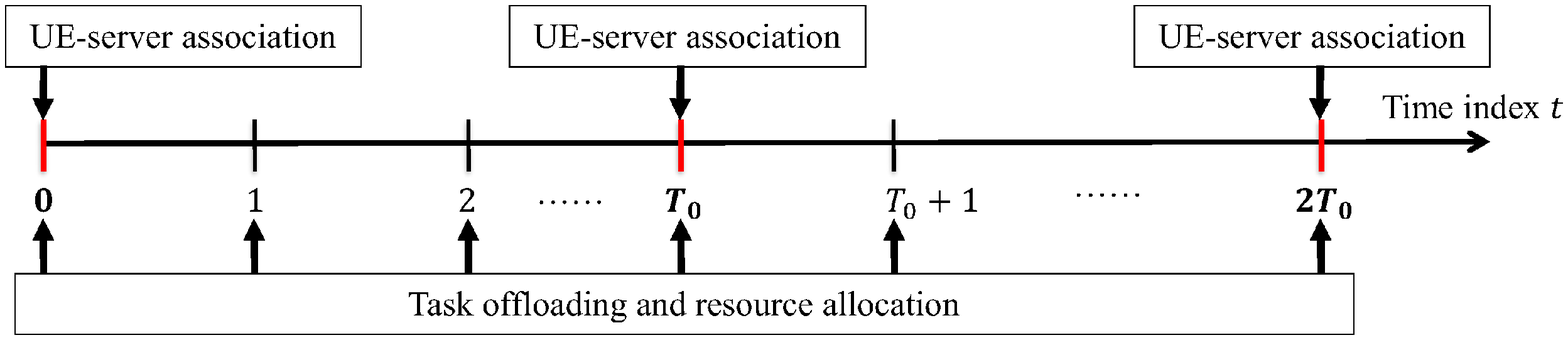}}
	\caption{System model and timeline of the considered MEC network.}
		\label{Fig: System model}
\end{figure}

\subsection{Traffic Model at the UE Side}
The UE uses one application in which tasks arrive in a stochastic manner.
Following the data-partition model \cite{fog_survey}, we assume that  each task can be computed locally, i.e., at the UE, or remotely, i.e., at the server. Different tasks are independent and  can be computed in parallel. Thus, having the task arrivals  $A_i(t)$ in time slot $t$, each UE $i$ divides its arrival into two disjoint parts in which one part $A^{\rm L}_{i}(t)$ is executed locally when the remaining tasks $A^{\rm O}_{i}(t)$ will be offloaded to the server.  Task splitting at UE  $i\in\mathcal{U}$ can be expressed as
\begin{align}\label{Eq: Task splitting constraint}
\begin{cases}
A_i(t)=A^{\rm L}_{i}(t)+A^{\rm O}_{i}(t),
\\ A^{\rm L}_{i}(t),A^{\rm O}_{i}(t)\in\{0,A_{\rm unit},2A_{\rm unit},\cdots\}.
\end{cases}
\end{align}
Here,  $A_{\rm unit}$ represents the unit task which cannot be further split. Moreover, 
we assume that task arrivals are independent and identically distributed (\emph{i.i.d.}) over time  with the average arrival rate  $\lambda_i=\mathbb{E}[A_i]/\tau$.

Each UE has two queue buffers to store the split tasks for local computation and offloading. 
For the local-computation queue of  UE $i\in\mathcal{U}$, the queue length  (in the unit of bits) in time slot $t$ is denoted by $Q_{i}^{\rm L}(t)$ which evolves as
\begin{equation}\label{Eq: local task queue}
Q^{\rm L}_{i}(t+1) =\max\Big\{Q^{\rm L}_{i}(t)+  A^{\rm L}_{i}(t)-\frac{\tau f_{i}(t)}{L_i},0\Big\}.
\end{equation}
Here, $f_i(t)$ (in the unit of cycle/sec) is the UE $i$'s allocated CPU-cycle frequency to execute tasks when $L_i$ accounts for the required CPU cycles per bit for computation, i.e.,  the processing density. The magnitude of the processing density depends on the performed application.\footnote{For example, the six-queen puzzle, 400-frame video game, seven-queen puzzle, face recognition, and virus scanning require the processing densities of 1760\,cycle/bit, 2640\,cycle/bit, 8250\,cycle/bit, 31680\,cycle/bit, and 36992\,cycle/bit, respectively \cite{JSAC_queue}.} Furthermore,
given a CPU-cycle frequency $f_i(t)$, the UE consumes the amount $\kappa [f_i(t)]^3$ of power for computation. $\kappa$ is a parameter affected by the device's hardware implementation \cite{FogPower,CPUhardware}.
For the task-offloading queue of UE $i\in\mathcal{U}$, we denote the queue length  (in the unit of bits) in time slot $t$ as $Q^{\rm O}_{i}(t)$. Analogously, the task-offloading queue dynamics is given by
\begin{equation}\label{Eq: offloading task queue}
Q^{\rm O}_{i}(t+1) =\max\Big\{Q^{\rm O}_{i}(t)+  A^{\rm O}_{i}(t)- \sum\limits_{j\in\mathcal{S}}\tau R_{ij}(t),0\Big\},
\end{equation}
in which
\begin{equation}\label{Eq: rate}
 R_{i j}(t)=W\log_2\bigg(1+\frac{\eta_{ij}(n)P_{i}(t)h_{ij}(t)}{N_0W+\sum\limits_{i'\in\mathcal{U}\setminus i}\eta_{i'j}(n)P_{i'}(t)h_{i'j}(t)}\bigg),
\end{equation}
$\forall\,i\in\mathcal{U},j\in\mathcal{S}$, is UE $i$'s transmission rate\footnote{All transmissions are encoded based on a Gaussian distribution.} to offload tasks to the associated MEC server $j$ in time slot $t\in\mathcal{T}(n)$.
 $P_{i}(t)$ and $N_0$ are UE $i$'s transmit power and the power spectral density of the additive white Gaussian noise (AWGN), respectively. $W$ is the bandwidth dedicated to each server and shared by its associated UEs.
 Additionally, $h_{ij}$ is  the wireless channel gain between UE $i\in\mathcal{U}$ and server $j\in\mathcal{S}$, including path loss and channel fading. We also assume that all channels experience block fading. 
In this work, we mainly consider the uplink, i.e., offloading tasks from the UE to the MEC server, and neglect the downlink, i.e., downloading the computed tasks from the server. 
The rationale is that compared with the offloaded tasks before computation, the computation results typically have smaller sizes \cite{KB_TWC18,Mach_survey,You_TWC17}. Hence, the overheads in the downlink can be neglected.

In order to minimize the total power consumption of resource allocation for local computation and task offloading, the UE adopts the dynamic voltage and frequency scaling (DVFS) capability to adaptively adjust its CPU-cycle frequency \cite{fog_survey,CPUhardware}.
Thus, to allocate the CPU-cycle frequency and transmit power, we impose the following constraints at each UE $i\in\mathcal{U}$, i.e.,
\begin{equation}\label{Eq: UE resource constraint}
\begin{cases}
\kappa[ f_i(t)]^3+P_{i}(t)\leq P_{i}^{\max},
\\f_{i}(t)\geq  0,
\\P_{i}(t)\geq 0,
\end{cases}
\end{equation}
where $P_{i}^{\max}$ is UE $i$'s power budget.

\subsection{Traffic Model at the Server Side}

We assume that each MEC server has distinct queue buffers to store different UEs' offloaded tasks, where the queue length (in bits) of the UE $i$'s offloaded tasks at server $j$ in time slot $t$ is denoted by $Z_{ji}(t)$. The offloaded-task queue length evolves as
\begin{align}
Z_{j i}(t+1)&=\max\Big\{Z_{j i}(t)+  \min\big\{Q^{\rm O}_{i}(t)+  A^{\rm O}_{i}(t),\tau R_{ij}(t)\big\}\notag
\\&\hspace{12em}-\frac{\tau f_{ji}(t)}{L_i} ,0\Big\}\label{Eq: computation queue}
\\ &\leq \max\Big\{Z_{j i}(t)+  \tau R_{ij}(t)-\frac{\tau f_{ji}(t)}{L_i} ,0\Big\},\label{Eq: computation queue ineq}
\end{align}
$\forall\,i\in\mathcal{U},j\in\mathcal{S}$. Here, $f_{ji}(t)$ is the server $j$'s allocated CPU-cycle frequency to process UE $i$'s offloaded tasks.
Note that the MEC server is deployed to provide a faster computation capability for the UE. Thus, we consider the scenario in which each CPU core of the MEC server is dedicated to at most one UE (i.e., its offloaded tasks) in each time slot, and a UE's offloaded tasks at each server can only be computed by one CPU core at a time \cite{TaskSplitting,FogPower}.
The considered computational resource scheduling mechanism at the MEC server is mathematically formulated as 
\begin{equation}\label{Eq: Server resource constraint}
\begin{cases}
\sum\limits_{i\in\mathcal{U}}\mathbbm{1}_{\{f_{ji}(t)>0\}}\leq N_j,&\forall\,j\in\mathcal{S},
\\f_{ji}(t)\in\{0, f_{j}^{\max}\},&\forall\,i\in\mathcal{U},j\in\mathcal{S},
\end{cases}
\end{equation}
where $N_j$ denotes the total CPU-core number of server $j$, $f_{j}^{\max}$ is server $j$'s computation capability of one CPU core, and $\mathbbm{1}_{\{\cdot\}}$ is  the indicator function. 
In \eqref{Eq: Server resource constraint}, we account for the allocated CPU-cyle frequencies to all UEs even though some UEs are not associated with this server in the current time frame. The rationale will be detailedly explained in Section \ref{Sec: Server resource scheduling} after  formulating the concerned optimization problem.
Additionally,  in order to illustrate the relationship between the offloaded-task queue length and the transmission rate, we introduce inequality \eqref{Eq: computation queue ineq} which will be further used to formulate the latency and reliability requirements of the considered MEC system and derive the solution of the studied optimization problem.

\section{Latency Requirements, Reliability Constraints, and Problem Formulation}\label{Sec: problem}

In this work, the end-to-end delays experienced by the locally-computed tasks $A^{\rm L}_i(t)$ and offloaded tasks $A^{\rm O}_i(t)$ consist of different components. When the task is computed locally, it experiences  the queuing delay (for computation) and computation delay at the UE. If the task is offloaded to the MEC server, the end-to-end delay includes: 1) queuing delay (for offloading) at  the UE, 2) wireless transmission delay while offloading, 3) queuing delay (for computation) at the server, and 4) computation delay at the server.
From Little's law, we know that the average queuing delay is proportional to the average queue length  \cite{Littlelaw}.
However, without taking the tail distribution of the queue length into account, solely focusing on the average queue length fails to account for the low-latency and reliability requirement \cite{MehdiURLLC}. To tackle this, we focus on the statistics of the task queue and impose  probabilistic constraints on the local-computation and task-offloading queue lengths of each UE $i\in\mathcal{U}$ as follows:
\begin{align}
 &\lim\limits_{T\to\infty}\frac{1}{T}\sum\limits_{t=1}^{T}\Pr\big(Q^{\rm L}_{i}(t)> d^{\rm L}_{i} \big)\leq \epsilon^{\rm L}_{i},\label{Eq: Violation-Loc-Prob}
 \\& \lim\limits_{T\to\infty}\frac{1}{T}\sum\limits_{t=1}^{T}\Pr\big(Q^{\rm O}_{i}(t)> d^{\rm O}_{i} \big)\leq \epsilon^{\rm O}_{i}.\label{Eq: Violation-Off-Prob}
\end{align}
Here, $d^{\rm L}_{i}$ and $d^{\rm O}_{i}$ are the  queue length bounds when $ \epsilon^{\rm L}_{i}\ll 1$ and $ \epsilon^{\rm O}_{i}\ll1$ are the tolerable  bound violation probabilities.
Furthermore, the queue length bound violation also undermines the reliability issue of task computation. 
For example, if a finite-size queue buffer is over-loaded, the incoming tasks will be dropped.

In addition to the bound violation probability, let us look at the complementary cumulative distribution function (CCDF) of the UE's local-computation queue length, i.e., $\bar{F}_{Q^{\rm L}_i}(q)=\Pr\big(Q^{\rm L}_{i}> q \big)$, which reflects the queue length profile. If the monotonically decreasing  CCDF decays faster while increasing $q$, the probability of having an extreme queue length is lower.
Since the prime concern in this work lies in the extreme-case events with very low occurrence  probabilities, i.e., $\Pr\big(Q^{\rm L}_{i}(t)> d^{\rm L}_{i} \big)\ll 1$, we resort to principles of extreme value theory\footnote{Extreme value theory is a powerful and robust framework to study the tail behavior of a distribution. Extreme value theory also  provides statistical models for the computation of extreme risk measures.} to characterize the statistics and tail distribution of the extreme event $Q^{\rm L}_{i}(t)> d^{\rm L}_{i}$.  
To this end, we first introduce the \emph{Pickands–Balkema–de Haan theorem} \cite{EVT}. 
\begin{theorem}[{\bf Pickands–Balkema–de Haan theorem}]\label{Thm: Pareto}
Consider a random variable $Q$, with the cumulative distribution function (CDF) $F_{Q}(q)$, and a threshold value $d$. 
As the threshold $d$ closely approaches $F^{-1}_{Q}(1)$, i.e., $d\to\sup\{q\!\!: F_{Q}(q)<1\}$, the conditional CCDF of the excess value $X|_{Q>d}=Q-d>0$, i.e.,  $\bar{F}_{X|Q>d}(x)=\Pr(Q-d> x|Q>d)$, can be approximated by a generalized Pareto distribution (GPD) $G(x;\sigma,\xi)$, i.e.,
 \begin{align}
 &\bar{F}_{X|Q>d}(x)\approx G(x;\sigma,\xi)\notag
 \\&\coloneqq
\begin{cases}
  \Big(1+\frac{\xi x}{\sigma}\Big)^{-1/\xi},
 &\mbox{where } x\geq 0\mbox{ and } \xi>0,
\\  e^{-x/\sigma}, &\mbox{where } x\geq 0 \mbox{ and } \xi=0,
\\  \Big(1+\frac{\xi x}{\sigma}\Big)^{-1/\xi},
 &\mbox{where } 0\leq  x\leq -\sigma/\xi \mbox{ and } \xi<0,
\end{cases} \label{Eq: Theo GPD}
\end{align}
which is characterized by a scale parameter $\sigma >0$ and a shape parameter $\xi \in\mathbb{R}$. 
 \end{theorem}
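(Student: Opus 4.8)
The plan is to prove the statement through its classical connection to the Fisher--Tippett--Gnedenko extremal types theorem. First I would make explicit the regularity hypothesis that the statement leaves implicit: the GPD approximation holds precisely when the CDF $F_Q$ lies in the maximum domain of attraction (henceforth $\mathrm{MDA}$) of a generalized extreme value law $H_\xi$ with shape parameter $\xi$, i.e.\ when suitably affinely normalized maxima of i.i.d.\ copies of $Q$ converge in distribution to $H_\xi$. The strategy is then to establish the implication that membership in $\mathrm{MDA}(H_\xi)$ forces the conditional excess CCDF $\bar{F}_{X|Q>d}(x)=\bar{F}_Q(d+x)/\bar{F}_Q(d)$ to converge, under an appropriate choice of scale $\sigma=\sigma(d)$, to the GPD $G(x;\sigma,\xi)$ as $d\to\sup\{q: F_Q(q)<1\}$.

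The core of the argument splits into three cases according to the sign of $\xi$, each handled by Karamata's theory of regular variation. For $\xi>0$ (Fr\'{e}chet domain), membership in the $\mathrm{MDA}$ is equivalent to $\bar{F}_Q$ being regularly varying at infinity with index $-1/\xi$; I would then set $\sigma(d)=\xi d$ and invoke the uniform convergence theorem for regularly varying functions to obtain $\bar{F}_Q(d+\xi d\, y)/\bar{F}_Q(d)\to(1+\xi y)^{-1/\xi}$, which is exactly the GPD tail. For $\xi<0$ (Weibull domain), the support has a finite right endpoint $x_F=\sup\{q: F_Q(q)<1\}<\infty$, and one applies the same regular-variation machinery to the reflected tail $\bar{F}_Q(x_F-1/\cdot)$, yielding the bounded-support GPD with $0\le x\le -\sigma/\xi$.

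The hard part will be the boundary case $\xi=0$ (Gumbel domain), since there the tail need not be regularly varying and the simple scaling $\sigma(d)=\xi d$ degenerates. Here I would instead use the von Mises characterization: there exists a positive auxiliary function $a(\cdot)$, such as the mean-excess (reciprocal-hazard) function, for which $\bar{F}_Q(d+a(d)x)/\bar{F}_Q(d)\to e^{-x}$, and identifying $\sigma=a(d)$ recovers the exponential branch of the GPD. To finish, I would upgrade these pointwise limits to convergence that is uniform in $x$: in the $\xi\neq 0$ cases this is automatic from monotonicity of $\bar{F}_{X|Q>d}$ together with continuity of the GPD limit via a P\'{o}lya-type argument, while for $\xi=0$ it is already built into the von Mises condition. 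I would close by noting that the result is phrased as an approximation precisely because, at any fixed finite $d$, the GPD is the \emph{limiting} form as $d\to x_F$ rather than an exact identity, which justifies its use as a tail model for the extreme queue-length events in the sequel.
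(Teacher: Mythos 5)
The paper does not prove this statement at all: Theorem~\ref{Thm: Pareto} is quoted as a classical result of extreme value theory and the proof is delegated entirely to the cited reference \cite{EVT} (the text even says ``from the proof \cite{EVT} for Theorem~\ref{Thm: Pareto}, we know that\dots''). So there is no in-paper argument to compare against; what you have written is an outline of the standard textbook proof, and as such it is essentially correct. You rightly make explicit the hypothesis the paper suppresses --- that $F_Q$ must lie in a maximum domain of attraction, without which the approximation claim is false --- and your three-case treatment via Karamata regular variation ($\bar F_Q\in RV_{-1/\xi}$ with $\sigma(d)=\xi d$ for $\xi>0$, reflection about the finite endpoint for $\xi<0$) together with the P\'olya-type upgrade to uniform convergence is exactly how the result is proved in the literature. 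The one place where your sketch is looser than a complete proof is the Gumbel case: the von Mises condition you invoke is sufficient but not necessary for membership in $\mathrm{MDA}(\Lambda)$, so to cover the whole Gumbel domain you need the representation theorem stating that any such $F$ is tail-equivalent to a von Mises function, i.e.\ $\bar F_Q(x)=c(x)\exp\bigl(-\int_z^x g(t)/a(t)\,dt\bigr)$ with $c(x)\to c>0$, $g(t)\to 1$, $a'(t)\to 0$; the excess-over-threshold limit $e^{-x}$ with $\sigma=a(d)$ then follows, and the factors $c(d+a(d)x)/c(d)\to 1$ take care of the tail-equivalence slack. With that refinement your plan closes; note also that you only argue the ``MDA implies GPD limit'' direction, but that is the only direction the paper's statement asserts, so nothing further is required.
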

In other words, the conditional CCDF of the excess value $X|_{Q>d}$ converges to a GPD as $d\to\infty$. However, from the proof \cite{EVT} for Theorem \ref{Thm: Pareto}, we know that the GPD provides a good approximation when $F_{Q}(d)$ is close to 1, e.g., $F_{Q}(d)=0.99$. That is, depending on the CDF of $Q$, imposing a very large $d$ might not be necessary for obtaining the approximated GPD.
Moreover, for a GPD $G(x;\sigma,\xi)$, its mean  $\sigma/(1-\xi)$ and other higher-order statistics such as variance  $\frac{\sigma^2}{(1-\xi)^2(1-2\xi)}$ and skewness exist if $\xi<1$, $\xi<\frac{1}{2}$, and $\xi<\frac{1}{3}$, respectively. 
Note that the scale parameter $\sigma$ and the domain $x$ of $G(x;\sigma,\xi)$ are in the same order. In this regard, we can see that $G(\sigma;\sigma,0)=e^{-1}=0.37$ at $x=\sigma$ and $G(3\sigma;\sigma,0)=e^{-3}=0.05$ at $x=3\sigma$ in  \eqref{Eq: Theo GPD}.
 We also show the CCDFs of the GPDs for various shape parameters $\xi$ in Fig.~\ref{Fig: GPD case}, where the x-axis is indexed with respect to the normalized value $x/\sigma$. As shown in Fig.~\ref{Fig: GPD case}, the decay speed of the CCDF increases as $\xi$ decreases.  In contrast with the curves with $\xi\geq0$, we can see that the CCDF decays rather sharply when $\xi\leq -3$.
\begin{figure}[t]
\centering
\includegraphics[width=\columnwidth]{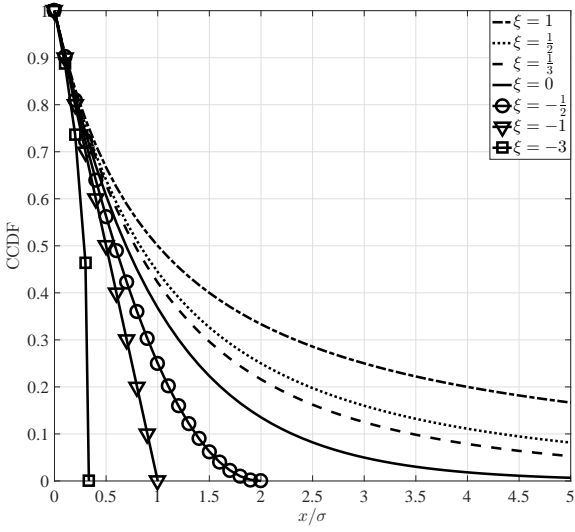}
	\caption{CCDFs of the GPDs for various shape pamameters $\xi$.}
\label{Fig: GPD case}
\end{figure}

Now, let us denote the  excess value (with respect to the threshold $d_i^{\rm L}$ in \eqref{Eq: Violation-Loc-Prob})  of the local-computation queue of each UE $i\in\mathcal{U}$ in time slot $t$ as  $X^{\rm L}_i(t)|_{Q^{\rm L}_{i}(t)> d^{\rm L}_i}=Q^{\rm L}_{i}(t)- d^{\rm L}_{i}>0$. 
By applying Theorem \ref{Thm: Pareto},  the  excess queue value can be approximated by a GPD $G(x_i;\sigma^{\rm L}_{i},\xi^{\rm L}_{i})$ whose mean and variance  are
\begin{align}
\mathbb{E}\big[X^{\rm L}_{i}(t)|Q^{\rm L}_{i}(t)> d^{\rm L}_{i} \big]& \approx\frac{\sigma^{\rm L}_{i}}{1-\xi^{\rm L}_{i}},\label{Eq:remodel_Pareto_mean}
\\\mbox{Var}\big(X^{\rm L}_{i}(t) |Q^{\rm L}_{i}(t)> d^{\rm L}_{i}\big)& \approx\frac{(\sigma^{\rm L}_{i})^2}{(1-\xi^{\rm L}_{i})^2(1-2\xi^{\rm L}_{i})},\label{Eq:remodel_Pareto_variance}
\end{align}
with the  corresponding scale parameter   $\sigma^{\rm L}_{i}$  and shape parameter $\xi^{\rm L}_{i}$.
In \eqref{Eq:remodel_Pareto_mean} and \eqref{Eq:remodel_Pareto_variance}, we can find that the smaller  $\sigma^{\rm L}_{i}$ and $\xi^{\rm L}_{i}$ are, the smaller the mean value and variance. Since the approximated GPD is just characterized by the  scale  and  shape parameters as mentioned previously, therefore,
we impose thresholds on these two parameters, i.e.,  $\sigma^{\rm L}_{i}\leq \sigma_{i}^{\rm L, th}$ and $\xi^{\rm L}_{i}\leq \xi_{i}^{\rm L, th}$. 
The selection of threshold values can be referred to the above discussions about the GPD, Fig.~\ref{Fig: GPD case}, and the magnitude of the interested metric's values.
Subsequently, applying the two parameter thresholds and $\mbox{Var}(X^{\rm L}_{i})=\mathbb{E}[(X^{\rm L}_{i})^2]-\mathbb{E}[X^{\rm L}_{i}]^2$ to \eqref{Eq:remodel_Pareto_mean} and \eqref{Eq:remodel_Pareto_variance}, we consider the constraints on the long-term time-averaged  conditional mean and second moment of the excess value of each UE's local-computation queue length, i.e., $\forall\,i\in\mathcal{U}$,
\begin{align}
\bar{X}^{\rm L}_{i}& =\lim\limits_{T\to\infty}\frac{1}{T}\sum\limits_{t=1}^{T}\mathbb{E}\big[X^{\rm L}_{i} (t)|Q^{\rm L}_{i} (t)> d^{\rm L}_{i} \big]\leq \frac{\sigma_i^{\rm L, th}}{1-\xi_i^{\rm L, th}},\label{Eq: GPD-Loc-mean}
\\\bar{Y}^{\rm L}_{i}&=\lim\limits_{T\to\infty}\frac{1}{T}\sum\limits_{t=1}^{T}\mathbb{E}\big[Y^{\rm L}_{i}(t) |Q^{\rm L}_{i}(t)> d^{\rm L}_{i}\big]\notag
\\&\leq \frac{2\big(\sigma_i^{\rm L, th}\big)^2}{\big(1-\xi_i^{\rm L, th}\big)\big(1-2\xi_i^{\rm L, th}\big)},\label{Eq: GPD-Loc-var}
\end{align}
 with $Y^{\rm L}_{i}(t)= [X^{\rm L}_{i}(t)]^2$.
Analogously, denoting  the  excess value, with respect to the threshold $d_i^{\rm O}$, of UE  $i$'s task-offloading queue length in time slot $t$ as  $X^{\rm O}_i(t)|_{Q^{\rm O}_{i}(t)> d^{\rm O}_i}=Q^{\rm O}_{i}(t)- d^{\rm O}_{i}>0$,
we have the constraints on the long-term time-averaged  conditional mean and second moment
\begin{align}
\bar{X}^{\rm O}_{i} &=\lim\limits_{T\to\infty}\frac{1}{T}\sum\limits_{t=1}^{T}\mathbb{E}\big[X^{\rm O}_{i}(t)|Q^{\rm O}_{i}(t)> d^{\rm O}_{i} \big]
\leq \frac{\sigma_i^{\rm O, th}}{1-\xi_i^{\rm O, th}},\label{Eq: GPD-Off-mean}
\\\bar{Y}^{\rm O}_{i} &=\lim\limits_{T\to\infty}\frac{1}{T}\sum\limits_{t=1}^{T}\mathbb{E}\big[Y^{\rm O}_{i}(t) |Q^{\rm O}_{i}(t)> d^{\rm O}_{i}\big]\notag
\\&\leq \frac{2\big(\sigma_i^{\rm O, th}\big)^2}{\big(1-\xi_i^{\rm O, th}\big)\big(1-2\xi_i^{\rm O, th}\big)},\label{Eq: GPD-Off-var}
\end{align}
for all UEs $i\in\mathcal{U}$,
in which $ \sigma_{i}^{\rm O, th}$ and $\xi_{i}^{\rm O, th}$ are the  thresholds for the characteristic parameters of the approximated GPD, and  $Y^{\rm O}_{i}(t)= [X^{\rm O}_{i}(t)]^2$.

Likewise, the average queuing delay at the server is proportional to the ratio of the average queue length to the average transmission rate. Referring to \eqref{Eq: computation queue ineq}, we consider the probabilistic constraint as follows:
\begin{align}\label{Eq: Violation-Server-Prob}
\lim\limits_{T\to\infty}\frac{1}{T}\sum\limits_{t=1}^{T}\Pr\bigg(\frac{Z_{ji}(t)}{\tilde{R}_{ij}(t-1)}> d_{ji}\bigg)\leq \epsilon_{ji},
\end{align}
$\forall\,i\in\mathcal{U},j\in\mathcal{S}$, with the threshold  $d_{ji}$ and  tolerable violation probability $\epsilon_{ji}\ll 1$, on the offloaded-task queue length at the MEC server.
$\tilde{R}_{ij}(t-1)=\frac{1}{t}\sum_{\tau=0}^{t-1}R_{ij}(\tau)$ is the moving time-averaged transmission rate.
Similar to the task queue lengths at the UE side,
 we further denote the excess value,  with respect to the threshold $\tilde{R}_{ij}(t-1)d_{ji}$, in time slot $t$ as  $X_{ji}(t)|_{Z_{ji}(t)>\tilde{R}_{ij}(t-1) d_{ji}}=Z_{ji}(t)- \tilde{R}_{ij}(t-1)d_{ji}>0$ of the offloaded-task queue length of UE $i\in\mathcal{U}$ at server $j\in\mathcal{S}$ and impose the constraints as follows:
\begin{align}
\bar{X}_{ji} =\lim\limits_{T\to\infty}\frac{1}{T}\sum\limits_{t=1}^{T}\mathbb{E}\big[X_{ji}(t)|Z_{ji}(t)>\tilde{R}_{ij}(t-1) d_{ji} \big]&\notag
\\\leq \frac{\sigma_{ji}^{\rm th}}{1-\xi_{ji}^{\rm th}}&,\label{Eq: GPD-Server-mean}
\\\bar{Y}_{ji} =\lim\limits_{T\to\infty}\frac{1}{T}\sum\limits_{t=1}^{T}\mathbb{E}\big[Y_{ji}(t) |Z_{ji}(t)> \tilde{R}_{ij}(t-1)d_{ji}\big] &\notag
\\\leq \frac{2(\sigma_{ji}^{\rm th})^2}{(1-\xi_{ji}^{\rm th})(1-2\xi_{ji}^{\rm th})}&,\label{Eq: GPD-Server-var}
\end{align}
with $Y_{ji}(t)= [X_{ji}(t)]^2$. Here, $ \sigma_{ji}^{\rm th}$ and $\xi_{ji}^{\rm th}$ are the  thresholds for the characteristic parameters of the approximated GPD.

We note that the local computation delay at the UE and the transmission delay while offloading are inversely proportional to the computation speed $f_i(t)/L_i$ and the transmission rate $\sum_{j\in\mathcal{S}}R_{ij}(t)$ as per \eqref{Eq: local task queue} and \eqref{Eq: offloading task queue}, respectively. To decrease the local computation and transmission delays, the UE should allocate a higher local CPU-cycle frequency and more transmit power, which, on the other hand, incurs  energy shortage. 
Since allocating a higher CPU-cycle frequency and more transmit power can also further decrease the queue length, both (local computation and transmission) delays are implicitly taken into account in the queue length constraints \eqref{Eq: Violation-Loc-Prob}, \eqref{Eq: Violation-Off-Prob}, and \eqref{Eq: GPD-Loc-mean}--\eqref{Eq: GPD-Off-var}. 
At the server side,  the remote computation delay can be neglected because one CPU core with the better computation capability is dedicated to one UE's offloaded tasks at a time. On the other hand, the server needs to schedule its computational resources, i.e., multiple CPU cores, when the  associated UEs are more than the CPU cores.

Incorporating the aforementioned latency requirements and reliability constraints, 
the studied optimization problem is formulated as follows:
\begin{IEEEeqnarray*}{rcl}
\mbox{\bf MP:}~~&\underset{\boldsymbol{\eta}(n), \mathbf{f}(t),\mathbf{P}(t)}{\mbox{minimize}}&~~ \sum\limits_{i\in\mathcal{U}}(\bar{P}^{\rm C}_{i}+\bar{P}^{\rm T}_{i})\notag
\\&\mbox{subject to}&~~\mbox{\eqref{Eq: UE-server association constraint} for UE-server association,}\notag
\\&&~~\mbox{\eqref{Eq: Task splitting constraint} for task splitting,}\notag
\\&&~~\mbox{\eqref{Eq: UE resource constraint} and \eqref{Eq: Server resource constraint} for resource allocation,}\notag
\\&&~~\mbox{\eqref{Eq: Violation-Loc-Prob}, \eqref{Eq: Violation-Off-Prob}, and \eqref{Eq: Violation-Server-Prob} for}\notag
\\&&~~\mbox{\hspace{4em}queue length bound violation,}\notag
\\&&~~\mbox{\eqref{Eq: GPD-Loc-mean}--\eqref{Eq: GPD-Off-var}, \eqref{Eq: GPD-Server-mean}, and  \eqref{Eq: GPD-Server-var} for the GPDs,}
\end{IEEEeqnarray*}
where $\bar{P}^{\rm C}_{i}=\lim\limits_{T\to\infty}\frac{1}{T}\sum_{t=0}^{T-1}\kappa [f_i(t)]^3$ and $\bar{P}^{\rm T}_{i}=\lim\limits_{T\to\infty}\frac{1}{T}\sum_{t=0}^{T-1}P_{i}(t)$ are the UE $i$'s long-term time-averaged  power consumptions for local computation and task offloading, respectively.  $\boldsymbol{\eta}(n)=(\eta_{ij}(n):i\in\mathcal{U},j\in\mathcal{S})$ and $\mathbf{P}(t)=(P_{i}(t)\!\!:i\in\mathcal{U})$ denote the network-wide UE-server association and transmit power allocation vectors, respectively.  In addition, $\mathbf{f}(t)=(f_i(t),\mathbf{f}_{j}(t)\!\!:i\in\mathcal{U},j\in\mathcal{S})$ denotes the network-wide computational resource allocation vector in which  $\mathbf{f}_j(t)=(f_{ji}(t)\!\!:i\in\mathcal{U},j\in\mathcal{S})$  is the computational resource allocation vector of server $j$. To solve problem {\bf MP}, we utilize  techniques from Lyapunov stochastic optimization and  propose a dynamic task offloading and resource allocation policy in the next section.

\section{Latency and Reliability-Aware Task Offloading and Resource Allocation}\label{Sec: approach}

Let us give an overview of the proposed task offloading and resource allocation  approach before specifying the details.
In the beginning of each time frame, i.e., every $T_0$ slots, we carry out a UE-server association, taking into account the wireless link strength, the UEs' and servers' computation capabilities, their historical workloads, and  URLLC constraints \eqref{Eq: Violation-Off-Prob} and 
 \eqref{Eq: GPD-Off-mean}--\eqref{Eq: GPD-Server-var}. To this end, a many-to-one matching algorithm is utilized to associate each server with multiple UEs. Afterwards, we focus on task offloading and resource allocation by solving three decomposed optimization problems, via Lyapunov optimization, in each time slot. At the UE side, each UE splits its instantaneous task arrivals into two parts, which will be computed locally and offloaded respectively, while allocating the local computation CPU-cyle frequency and transmit power for offloading. At the server side, each MEC server schedules its CPU cores to execute the UEs' offloaded tasks. 
 In the procedures (of task splitting and offloading, resource allocation, and CPU-core scheduling), the related URLLC constraints out of \eqref{Eq: Violation-Loc-Prob},  \eqref{Eq: Violation-Off-Prob}, and \eqref{Eq: GPD-Loc-mean}--\eqref{Eq: GPD-Server-var}  are considered. The details of our proposed approach  will be illustrated in the remainder of this section.

\subsection{Lyapunov Optimization Framework}

We first introduce a virtual queue $Q^{\rm L, (X)}_{i}$ for the long-term time-averaged constraint \eqref{Eq: GPD-Loc-mean} with the queue evolution as follows:
\begin{align}
&Q^{\rm L, (X)}_{i}(t+1) =\max\bigg\{Q^{\rm L, (X)}_{i}(t)\notag
\\&\quad+\bigg(X^{\rm L}_{i}(t+1)-\frac{\sigma_{i}^{\rm L, th}}{1-\xi_{i}^{\rm L, th}}\bigg)\times\mathbbm{1}_{\{Q^{\rm L}_{i}(t+1)> d^{\rm L}_{i}\}},0\bigg\},\label{Eq: virtual GPD_Loc-1}
\end{align}
in which the incoming traffic amount $X^{\rm L}_{i}(t+1)\times\mathbbm{1}_{\{Q^{\rm L}_{i}(t+1)> d^{\rm L}_{i}\}}$ and outgoing traffic amount $\frac{\sigma_{i}^{\rm L, th}}{1-\xi_{i}^{\rm L, th}}\times\mathbbm{1}_{\{Q^{\rm L}_{i}(t+1)> d^{\rm L}_{i}\}}$ correspond to the left-hand side and right-hand side of the inequality  \eqref{Eq: GPD-Loc-mean}, respectively.
Note that \cite{Neely/Stochastic} ascertains that the introduced virtual queue  is \emph{mean rate stable}, i.e., $\lim\limits_{t\to\infty}\frac{\mathbb{E}[|Q^{\rm L, (X)}_{i}(t)|]}{t}\to 0$, is equivalent to satisfying the long-term time-averaged constraint  \eqref{Eq: GPD-Loc-mean}.
Analogously, 
for the constraints  \eqref{Eq: GPD-Loc-var}--\eqref{Eq: GPD-Off-var}, \eqref{Eq: GPD-Server-mean}, and \eqref{Eq: GPD-Server-var}, we respectively introduce the virtual queues as follows:
\begin{align}
&Q^{\rm L, (Y)}_{i}(t+1) =\max\bigg\{Q^{\rm L, (Y)}_{i}(t)+\bigg(Y^{\rm L}_{i}(t+1)\notag
\\&- \frac{2\big(\sigma_{i}^{\rm L, th}\big)^2}{\big(1-\xi_{i}^{\rm L, th}\big)\big(1-2\xi_{i}^{\rm L, th}\big)}\bigg)
\times\mathbbm{1}_{\{Q^{\rm L}_{i}(t+1)> d^{\rm L}_{i}\}},0\bigg\} ,\label{Eq: virtual GPD_Loc-2}
\end{align}
\begin{align}
&Q^{\rm O, (X)}_{i}(t+1) =\max\bigg\{Q^{\rm O, (X)}_{i}(t)\notag
\\&+\bigg(X^{\rm O}_{i}(t+1)-\frac{\sigma_{i}^{\rm O, th}}{1-\xi_{i}^{\rm O, th}}\bigg)\times\mathbbm{1}_{\{Q^{\rm O}_{i}(t+1)> d^{\rm O}_{i}\}},0\bigg\},\label{Eq: virtual GPD_Off-1}
\\&Q^{\rm O, (Y)}_{i}(t+1) =\max\bigg\{Q^{\rm O, (Y)}_{i}(t)+\bigg(Y^{\rm O}_{i}(t+1)\notag
\\&- \frac{2\big(\sigma_{i}^{\rm O, th}\big)^2}{\big(1-\xi_{i}^{\rm O, th}\big)\big(1-2\xi_{i}^{\rm O, th}\big)}\bigg)\times\mathbbm{1}_{\{Q^{\rm O}_{i}(t+1)> d^{\rm O}_{i}\}},0\bigg\} ,\label{Eq: virtual GPD_Off-2}
\\&Q^{(\rm X)}_{ji}(t+1) =\max\bigg\{Q^{(\rm X)}_{ji}(t)+\bigg(X_{ji}(t+1)\notag
\\&- \frac{\sigma_{ji}^{\rm th}}{1-\xi_{ji}^{\rm th}}\bigg)\times\mathbbm{1}_{\{Z_{ji}(t+1)> \tilde{R}_{ij}(t)d_{ji}\}},0\bigg\},\label{Eq: virtual GPD_server-1}
\\&Q^{(\rm Y)}_{ji}(t+1) =\max\bigg\{Q^{(\rm Y)}_{ji}(t)+\bigg(Y^k_{ji}(t+1)\notag
\\&- \frac{2(\sigma_{ji}^{\rm th})^2}{(1-\xi_{ji}^{\rm th})(1-2\xi_{ji}^{\rm th})}\bigg)\times\mathbbm{1}_{\{Z_{ji}(t+1)> \tilde{R}_{ij}(t)d_{ji}\}},0\bigg\}.\label{Eq: virtual GPD_server-2}
\end{align}
Additionally, given an event  $B$ and the set of all possible outcomes $ \Omega$, we can derive
$\mathbb{E}[\mathbbm{1}_{\{B\}}]=1\cdot \Pr(B)+0\cdot \Pr(\Omega\setminus B)=\Pr(B)$.
By applying $\mathbb{E}[\mathbbm{1}_{\{B\}}]=\Pr(B)$, constraints \eqref{Eq: Violation-Loc-Prob}, \eqref{Eq: Violation-Off-Prob}, and \eqref{Eq: Violation-Server-Prob} can be equivalently rewritten as
\begin{align}
 \lim\limits_{T\to\infty}\frac{1}{T}\sum\limits_{t=1}^{T}\mathbb{E}\big[\mathbbm{1}_{\{Q^{\rm L}_{i}(t)> d^{\rm L}_{i}\}}\big]&\leq \epsilon^{\rm L}_{i},\label{Eq: remodel_Violation-Loc-Prob}
\\ \lim\limits_{T\to\infty}\frac{1}{T}\sum\limits_{t=1}^{T}\mathbb{E}\big[\mathbbm{1}_{\{Q^{\rm O}_{i}(t)> d^{\rm O}_{i}\}}\big]&\leq \epsilon^{\rm O}_{i},\label{Eq: remodel_Violation-Off-Prob}
\\ \lim\limits_{T\to\infty}\frac{1}{T}\sum\limits_{t=1}^{T}\mathbb{E}\big[\mathbbm{1}_{\{Z_{ji}(t)> \tilde{R}_{ij}(t-1)d_{ji}\}}\big]&\leq \epsilon_{ji}.\label{Eq: remodel_Violation-Server-Prob}
\end{align}
Then let us follow the above steps. The corresponding virtual queues of \eqref{Eq: remodel_Violation-Loc-Prob}--\eqref{Eq: remodel_Violation-Server-Prob} are expressed as
%%%%%%%%%%%%%%%%%%%%%%%%%%%%%%
%
%
\begin{align}
&Q^{\rm L, (Q)}_{i}(t+1) =\max\Big\{Q^{\rm L, (Q)}_{i}(t)
+\mathbbm{1}_{\{Q^{\rm L}_{i}(t+1)> d^{\rm L}_{i}\}} -  \epsilon^{\rm L}_{i},0\Big\},\label{Eq: virtual violation_Loc-1}
\\&Q^{\rm O, (Q)}_{i}(t+1) =\max\Big\{Q^{\rm O, (Q)}_{i}(t)
+\mathbbm{1}_{\{Q^{\rm O}_{i}(t+1)> d^{\rm O}_{i}\}} -  \epsilon^{\rm O}_{i},0\Big\},\label{Eq: virtual violation_Off-1}
\\&Q^{(\rm Z)}_{ji}(t+1) =\max\Big\{Q^{(\rm Z)}_{ji}(t)
+\mathbbm{1}_{\{Z_{ji}(t+1)>\tilde{R}_{ij}(t) d_{ji}\}}-  \epsilon_{ji},0\Big\}.\label{Eq: virtual violation_server-1}
\end{align}
%
%
%%%%%%%%%%%%%%%%%%%%%%%%%%%%%%
%
%
Now problem {\bf MP} is equivalently transferred to  \cite{Neely/Stochastic}
 \begin{IEEEeqnarray*}{rcl}
\mbox{\bf MP':}~~&\underset{\boldsymbol{\eta}(n), \mathbf{f}(t),\mathbf{P}(t)}{\mbox{minimize}}&~~ \sum\limits_{i\in\mathcal{U}}(\bar{P}^{\rm C}_{i}+\bar{P}^{\rm T}_{i})\notag
\\&\mbox{subject to}&~~\mbox{\eqref{Eq: UE-server association constraint}, \eqref{Eq: Task splitting constraint}, \eqref{Eq: UE resource constraint}, and \eqref{Eq: Server resource constraint}},\notag
\\&&~~\mbox{Stability of \eqref{Eq: virtual GPD_Loc-1}--\eqref{Eq: virtual GPD_server-2} and \eqref{Eq: virtual violation_Loc-1}--\eqref{Eq: virtual violation_server-1}}.
\end{IEEEeqnarray*}
To solve problem {\bf MP'}, we let $\mathbf{Q}(t)=\big(Q^{\rm L, (X)}_{i}(t),Q^{\rm L, (Y)}_{i}(t),Q^{\rm O, (X)}_{i}(t),Q^{\rm O, (Y)}_{i}(t),Q^{(\rm X)}_{ji}(t),Q^{(\rm Y)}_{ji}(t),$
\linebreak
$Q^{\rm L, (Q)}_{i}(t),Q^{\rm O, (Q)}_{i}(t),Q^{(\rm Z)}_{ji}(t)\!\!:i\in\mathcal{U},j\in\mathcal{S}\big)$  denote the combined queue vector for notational simplicity and express the conditional Lyapunov drift-plus-penalty for slot $t$ as
\begin{equation}\label{Eq: Conditional Lyapunov drift}
 \mathbb{E}\Big[\mathcal{L}(\mathbf{Q}(t+1))-\mathcal{L}(\mathbf{Q}(t))+\sum\limits_{i\in\mathcal{U}}V\big(\kappa [f_i(t)]^3+P_{i}(t)\big)
\Big|\mathbf{Q}(t)\Big],
\end{equation}
where 
\begin{align*}
&\mathcal{L}(\mathbf{Q}(t))=\frac{1}{2}\sum_{i\in\mathcal{U}}\Big(\big[Q^{\rm L, (X)}_{i}(t)\big]^2+\big[Q^{\rm L, (Y)}_{i}(t)\big]^2+\big[Q^{\rm O, (X)}_{i}(t)\big]^2
\\&\quad+\big[Q^{\rm O, (Y)}_{i}(t)\big]^2+\big[Q^{\rm L, (Q)}_{i}(t)\big]^2+\big[Q^{\rm O, (Q)}_{i}(t)\big]^2\Big)
\\&\quad+\frac{1}{2}\sum_{i\in\mathcal{U}}\sum_{j\in\mathcal{S}}\Big(\big[Q^{(\rm X)}_{ji}(t)\big]^2+\big[Q^{(\rm Y)}_{ji}(t)\big]^2+\big[Q^{(\rm Z)}_{ji}(t)\big]^2\Big)
\end{align*}
is the Lyapunov function. The term $V\geq 0$ is a parameter which trades off objective optimality and queue length reduction. 
Subsequently, plugging the inequality $(\max\{x,0\})^2\leq x^2$, all physical and virtual queue dynamics, and  \eqref{Eq: computation queue ineq} into  \eqref{Eq: Conditional Lyapunov drift},  we can derive
\begin{align}
& \eqref{Eq: Conditional Lyapunov drift}\leq C+ \mathbb{E}\bigg[\sum_{i\in\mathcal{U}}\Big[\Big(Q^{\rm L, (X)}_{i}(t)+Q^{\rm L}_{i}(t)+2Q^{\rm L, (Y)}_{i}(t)Q^{\rm L}_{i}(t)\notag
 \\&+2\big[Q^{\rm L}_{i}(t)\big]^3 \Big)\Big(  A^{\rm L}_{i}(t)-\frac{\tau f_{i}(t)}{L_i}\Big)\times\mathbbm{1}_{\{Q^{\rm L}_{i}(t)+  A_{i}(t)> d^{\rm L}_{i}\}}\notag
\\&+Q^{\rm L, (Q)}_{i}(t)\times\mathbbm{1}_{\{\max\{Q^{\rm L}_{i}(t)+  A^{\rm L}_{i}(t)-\tau f_{i}(t)/L_i,0\}> d^{\rm L}_{i}\}}\Big]\notag
\\&+\sum_{i\in\mathcal{U}}\Big[\Big(Q^{\rm O, (X)}_{i}(t)+Q^{\rm O}_{i}(t)+2Q^{\rm O, (Y)}_{i}(t)Q^{\rm O}_{i}(t)\notag
\\&+2\big[Q^{\rm O}_{i}(t)\big]^3\Big)\Big( A^{\rm O}_{i}(t)- \sum\limits_{j\in\mathcal{S}}\tau R_{ij}(t)\Big)\times\mathbbm{1}_{\{Q^{\rm O}_{i}(t)+  A_{i}(t)> d^{\rm O}_{i}\}}\notag
\\&+Q^{\rm O, (Q)}_{i}(t)\times\mathbbm{1}_{\{\max\{Q^{\rm O}_{i}(t)+  A^{\rm O}_{i}(t)- \sum\limits_{j\in\mathcal{S}}\tau R_{ij}(t),0\}> d^{\rm O}_{i}\}}\Big]\notag
\\&+\sum_{i\in\mathcal{U}}\sum_{j\in\mathcal{S}}\Big[\Big(Q^{(\rm X)}_{ji}(t)+Z_{ji}(t)+2Q^{(\rm Y)}_{ji}(t)Z_{ji}(t)+2\big[Z_{ji}(t)\big]^3\Big)\notag
\\&\Big(  \tau R_{ij}(t)-\frac{\tau f_{ji}(t)}{L_i} \Big)\times\mathbbm{1}_{\{Z_{ji}(t)+  \tau R^{\max}_{ij}(t)>\tilde{R}_{ij}(t-1) d_{ji}\}}\notag
\\&+Q^{(\rm Z)}_{ji}(t)\times\mathbbm{1}_{\{ \max\{Z_{ji}(t)+  \tau R_{ij}(t)-\tau f_{ji}(t)/L_i ,0\}>\tilde{R}_{ij}(t-1) d_{ji}\}}\Big]\notag
\\&+\sum\limits_{i\in\mathcal{U}}V\big(\kappa [f_i(t)]^3+P_{i}(t)\big)\Big|\mathbf{Q}(t)\bigg].\label{Eq: Lyapunov bound} 
\end{align}
Here,  $R_{ij}^{\max}(t)= W\log_2\big(1+\frac{P_{i}^{\max}h_{ij}(t)}{N_0W}\big)$ is UE $i$'s maximum offloading rate. Since the constant $C$ does not affect the system performance in Lyapunov optimization, we omit its details in \eqref{Eq: Lyapunov bound}  for expression simplicity.
Note that a solution to problem  {\bf MP'}  can be obtained by minimizing the upper bound \eqref{Eq: Lyapunov bound} in each time slot $t$, in which the optimality of  {\bf MP'} is asymptotically approached by increasing $V$  \cite{Neely/Stochastic}. 
To minimize \eqref{Eq: Lyapunov bound}, we have three decomposed optimization problems {\bf P1}, {\bf P2}, and {\bf P3} which are detailed and solved in the following parts.

The first decomposed problem, which jointly associates UEs with MEC servers and allocates UEs' computational and communication resources, is given by
%
%
%
%\begin{subequations}\label{Eq: Joint sub-problem}
\begin{IEEEeqnarray*}{rcl}
\mbox{\bf P1:}~~&\underset{\boldsymbol{\eta}(n), \mathbf{f}(t),\mathbf{P}(t)}{\mbox{minimize}}&~~
\sum_{i\in\mathcal{U}}\sum_{j\in\mathcal{S}}\big(\beta_{ji}(t)-\beta_{i}^{\rm O}(t)\big) \tau W \log_2\bigg(1\notag
\\&&~~+\frac{\eta_{ij}(n)P_{i}(t)h_{ij}(t)}{N_0W+\sum\limits_{i'\in\mathcal{U}\setminus i}\eta_{i'j}(n)P_{i'}(t)h_{i'j}(t)}\bigg) \notag
\\&&~~+\sum\limits_{i\in\mathcal{U}}\Big[ V\big(\kappa [f_i(t)]^3+P_{i}(t)\big)-\frac{\beta_{i}^{\rm L}(t)\tau f_{i}(t)}{L_i}\Big] \notag
\\&\mbox{subject to}&~~\mbox{\eqref{Eq: UE-server association constraint} and \eqref{Eq: UE resource constraint},}\notag
\end{IEEEeqnarray*}
%\end{subequations}
%
%
%
with
\begin{align}
&\beta_{ji}(t)=\Big(Q^{(\rm X)}_{ji}(t)+2Q^{(\rm Y)}_{ji}(t)Z_{ji}(t)+2\big[Z_{ji}(t)\big]^3+Z_{ji}(t)\Big)\notag
\\&\times\mathbbm{1}_{\big\{ Z_{ji}(t)+  \tau R^{\max}_{ij}(t)>\tilde{R}_{ij}(t-1) d_{ji}\big\}}+Q^{(\rm Z)}_{ji}(t)+Z_{ji}(t),\label{Eq: weight server}
\\&\beta_{i}^{\rm O}(t)=\Big(Q^{\rm O, (X)}_{i}(t)+2Q^{\rm O, (Y)}_{i}(t)Q^{\rm O}_{i}(t)+2\big[Q^{\rm O}_{i}(t)\big]^3\notag
\\&+Q^{\rm O}_{i}(t) \Big)\times\mathbbm{1}_{\{Q^{\rm O}_{i}(t)+  A_{i}(t)> d^{\rm O}_{i}\}}+Q^{\rm O, (Q)}_{i}(t)+Q^{\rm O}_{i}(t),
\\&\beta_{i}^{\rm L}(t)=\Big(Q^{\rm L, (X)}_{i}(t)+2Q^{\rm L, (Y)}_{i}(t)Q^{\rm L}_{i}(t)+2\big[Q^{\rm L}_{i}(t)\big]^3\notag
\\&+Q^{\rm L}_{i}(t) \Big)\times\mathbbm{1}_{\{Q^{\rm L}_{i}(t)+  A_{i}(t)> d^{\rm L}_{i}\}}+Q^{\rm L, (Q)}_{i}(t)+Q^{\rm L}_{i}(t).
\end{align}
Note that in {\bf P1}, the UE's allocated transmit power is coupled with the local CPU-cycle frequency. The transmit power also depends on the wireless channel strength to the associated server and the weight $\beta_{ji}(t)$ of the corresponding offloaded-task queue, in which
the former depends on the distance between the UE and  server when the latter is related to the MEC server's computation capability and the number of associated UEs. Therefore, the UEs' geographic configuration and the servers' computation capabilities should be taken into account while we associate the UEs with the servers.
Moreover,  UE-server association, i.e., $\boldsymbol{\eta}(n)$, and resource allocation, i.e., $\mathbf{f}(t)$ and $\mathbf{P}(t)$, are performed in two different timescales, i.e.,  in the beginning of each time frame and every time slot afterwards. We solve {\bf P1} in two steps, in which the UE-server association is firstly decided. Then, given the association results, UEs' CPU-cycle frequencies and transmit powers are allocated.

\subsection{UE-Server Association using Many-to-One Matching with Externalities}

 To associate  UEs to the MEC servers, let us focus on the wireless transmission part of {\bf P1} and, thus, fix $P_{i}(t)= P_{i}^{\max}$ and  $f_i(t)=0,\forall\,i\in\mathcal{U}$, at this stage. The wireless channel gain $h_{ij}(t)$ and the weight factors $\beta_{i}^{\rm O}(t)$ and  $\beta_{ji}(t)$ dynamically change in each time slot, whereas the UEs are re-associated with the servers in every $T_0$ slots. In order to take the impacts of $h_{ij}(t)$, $\beta_{i}^{\rm O}(t)$, and  $\beta_{ji}(t)$  into account, we consider the average strength for the channel gain, i.e., letting $h_{ij}(t)=\mathbb{E}[h_{ij}],\forall\,i\in\mathcal{U},j\in\mathcal{S}$, and the empirical average, i.e.,
\begin{align*}
\tilde{\beta}_{i}^{\rm O}(n)&=\frac{1}{(n-1)T_0}\sum\limits_{t=0}^{(n-1)T_0-1}\beta_{i}^{\rm O}(t),~\forall\,i\in\mathcal{U},
\\\tilde{\beta}_{j}(n)&=\frac{1}{(n-1)T_0}\sum\limits_{k=1}^{n-1}\sum\limits_{t=(k-1)T_0}^{kT_0-1}\sum\limits_{i\in\mathcal{U}}\frac{\eta_{ij}(k)\beta_{ji}(t)}{\sum\limits_{i\in\mathcal{U}}\eta_{ij}(k)},~\forall\,j\in\mathcal{S},
\end{align*}
as the estimations of the weight factors $\beta_{i}^{\rm O}(t)$ and  $\beta_{ji}(t)$.  Here, $\tilde{\beta}_{j}(n)$ represents the estimated average weight factor of a single UE's offloaded-task queue (at server $j$) since we also take the average over all the associated UEs in each time frame.
Incorporating the above assumptions, the UE-server association problem of {\bf P1} can be considered as
%
%
%
%\begin{subequations}\label{Eq: Server association problem}
\begin{IEEEeqnarray*}{rcl}
\mbox{\bf P1-1:}~~&\underset{\boldsymbol{\eta}(n)}{\mbox{maximize}}
&~~\sum_{i\in\mathcal{U}}\sum_{j\in\mathcal{S}}\big(\tilde{\beta}_{i}^{\rm O}(n) - \tilde{\beta}_{j}(n)\big) \log_2\bigg(1\notag
\\&&~~+\frac{\eta_{ij}(n)P_{i}^{\max}\mathbb{E}[h_{ij}]}{N_0W+\sum\limits_{i'\in\mathcal{U}\setminus i}\eta_{i'j}(n)P_{i'}^{\max}\mathbb{E}[h_{i'j}]}\bigg) 
\\&\mbox{subject to}&~~\eqref{Eq: UE-server association constraint}.
\end{IEEEeqnarray*}
%\end{subequations}
%
%
However, due to the binary nature of optimization variables and the non-convexity of the objective, {\bf P1-1} is an NP-hard nonlinear integer programming problem \cite{Nonlinear_program}. To tackle this, we invoke matching theory which provides an efficient and low-complexity way to  solve integer programming problems \cite{Matching_mag,Many_to_Many_exter} and has been utilized in  various wireless communication systems \cite{Zhang_IoTJ17,Many_to_Many_exter,Matching_cache,Matching_V2V,Di_matching}.

Defined in matching theory, \emph{matching} is a bilateral assignment between two sets of players, i.e., the sets of UEs and MEC servers, which matches each player in one set to the player(s) in the opposite set \cite{matching_theo}.
Referring to the structure of problem {\bf P1-1}, we consider a many-to-one matching model  \cite{Many_one} in which each UE $i\in\mathcal{U}$ is assigned to a server  when each server $j\in\mathcal{S}$ can be assigned to multiple UEs. Moreover,  UE $i$ is assigned to server $j$ if server $j$ is assigned to UE $i$, and vice versa. The considered many-to-one matching is formally defined as follows.
\begin{definition}
The considered many-to-one matching game  consists of two sets of players, i.e., $\mathcal{U}$ and $\mathcal{S}$, and the outcome of the many-to-one matching is a function $\eta$ from $\mathcal{U}\times \mathcal{S}$ to the set of all subsets of $\mathcal{U}\times\mathcal{S}$ with 
\begin{enumerate}
\item
$|\eta(i)|=1,\forall\,i\in\mathcal{U}$, 
\item
 $|\eta(j)|\leq U,\forall\,j\in\mathcal{S}$, 
 \item
  $j=\eta(i)\Leftrightarrow i\in\eta(j)$.
  \end{enumerate}
\end{definition}
Having a matching $\eta$,  the UE-server association indicator can be specified as per, $\forall\,i\in\mathcal{U},j\in\mathcal{S}$, 
\begin{equation}\label{Eq: matching-association}
\begin{cases}
\eta_{ij}(n)=1,&\mbox{if~}j=\eta(i),
 \\\eta_{ij}(n)=0,&\mbox{otherwise}.
\end{cases}
\end{equation}
Noe that {\bf P1-1} is equivalent to a weighted sum rate maximization problem in which  $\big(\tilde{\beta}_{i}^{\rm O}(n) - \tilde{\beta}_{j}(n)\big) \log_2\Big(1+\frac{P_{i}^{\max}\mathbb{E}[h_{ij}]}{N_0W+\sum\limits_{i'\in\mathcal{U}\setminus i}\eta_{i'j}(n)P_{i'}^{\max}\mathbb{E}[h_{i'j}]}\Big)$ can be treated as UE $i$'s weighted transmission rate 
provided that UE $i$ is matched to server $j$.  Thus, the UE's matching preference over the servers can be chosen based on the  weighted rates in the descending order.
Since the rate is affected by the other UEs (i.e., the players in the same set of the matching game) via interference if they are matched to the same server, the UE's preference also depends on the matching state of the other UEs. 
The interdependency between the players' preferences is called  \emph{externalities} \cite{Many_to_Many_exter} in which  the player's preference dynamically changes with the matching state of the other players in the same set.
Thus, the UE's preference over matching states should be adopted.
To this end, given a matching $\eta$ with $j=\eta(i)$, we define UE $i$'s utility as
\begin{align}\label{Eq: UE utility}
& \Psi_i(\eta)=\big(\tilde{\beta}_{i}^{\rm O}-\tilde{\beta}_{j}\big) \log_2\bigg(1\notag
 \\&+\frac{P_{i}^{\max}\mathbb{E}[h_{ij}]}{N_0W+\sum\limits_{i'\in\mathcal{U}\setminus i}\mathbbm{1}_{\{\eta(i')=j\}}P_{i'}^{\max}\mathbb{E}[h_{i'j}]}\bigg),~\forall\,i\in\mathcal{U},
\end{align}
and server $j$'s utility as
\begin{align}\label{Eq: server utility}
& \Psi_j(\eta)=\sum_{i\in\mathcal{U}}
 \big(\tilde{\beta}_{i}^{\rm O}-\tilde{\beta}_{j}\big) \log_2\bigg(1\notag
 \\&+\frac{\eta_{ij}P_{i}^{\max}\mathbb{E}[h_{ij}]}{N_0W+\sum\limits_{i'\in\mathcal{U}\setminus i}\mathbbm{1}_{\{\eta(i')=j\}}P_{i'}^{\max}\mathbb{E}[h_{i'j}]}\bigg),~\forall\,j\in\mathcal{S}.
\end{align}
The UE's and server's matching preferences are based on their own utilities in a descending order.  For notational simplicity. we remove the time index $n$ in \eqref{Eq: UE utility} and \eqref{Eq: server utility}.
Subsequently, we consider the notion of \emph{swap matching} to deal with externalities \cite{Many_one}.
\begin{definition}\label{Def: swap}
Given a may-to-one matching $\eta$, a pair of UEs $(i,i')$, and a pair of servers $(j,j')$ with $j=\eta(i)$ and $j'=\eta(i')$,
a swap matching $\eta_{ij}^{i'j'}$ is
\begin{align*}
\eta_{ij}^{i'j'}=\big\{\eta\setminus \{(i,j), (i',j')\}\big\}\cup\big\{(i,j'),(i',j)\big\}.
\end{align*}
\end{definition}
In other words, we have $j'=\eta_{ij}^{i'j'}(i)$ and $j=\eta_{ij}^{i'j'}(i')$  in the swap matching $\eta_{ij}^{i'j'}$ for the UE pair $(i,i')$ and server pair $(j,j')$, whereas the matching state of the other UEs and servers remains identical in both $\eta$ and $\eta_{ij}^{i'j'}$.
Furthermore, in Definition \ref{Def: swap}, one of the UE pair in the swap operation, e.g., $i'$, can be an open spot of server $j'$ in $\eta$ with $|\eta(j')|<U$. In this situation, we have $|\eta_{ij}^{i'j'}(j')|-|\eta(j')|=1$ and $|\eta(j)|-|\eta_{ij}^{i'j'}(j)|=1$. Moreover,  $\Psi_{i'}(\eta)=0$ and  $\Psi_{i'}(\eta_{ij}^{i'j'})=0$.
\begin{definition}
For the matching $\eta$, $(i,i')$ is a swap-blocking pair \cite{Many_to_Many_exter} if and only if 
\begin{enumerate}
\item
 $\forall\,u\in\{i,i',j,j'\}$, $\Psi_u(\eta_{ij}^{i'j'})\geq \Psi_u(\eta)$,
 \item
  $\exists\,u\in\{i,i',j,j'\}$ such that $\Psi_u(\eta_{ij}^{i'j'})> \Psi_u(\eta)$.
  \end{enumerate}
\end{definition}
Therefore, provided that the matching state of the remaining UEs and servers is fixed, two UEs $(i,i')$ exchange their respectively matched servers if both UEs' and both servers' utilities will not be worse off, and at lest one's utility is better off, after the swap. The same criteria are applicable when the UE $i$ changes to another server $j'$ from the current matched server $j$, i.e., $i'$ is an open spot of server $j'$.
\begin{definition}\label{Def: Two-side stable}
A matching $\eta$ is two-sided exchange-stable if there is no  swap-blocking pair \cite{Many_one}.
\end{definition}

In summary,  we first initialize a matching $\eta$ and calculate utilities \eqref{Eq: UE utility} and \eqref{Eq: server utility}. Then, we iteratively find a swap-block pair and update the swap matching until two-sided exchange stability is achieved. The steps to solve problem {\bf P1-1} by many-to-one matching with externalities are detailed in Algorithm \ref{Alg: matching}. 
In each iteration of Algorithm \ref{Alg: matching}, there are $\binom U2+U(S-1)$ possibilities to form a swap-blocking pair, in which there are $\binom U2$ pairs of two different UEs.  The term $U(S-1)$ accounts for all the swap-blocking pairs consisting of a UE $i\in\mathcal{U}$ and an open spot of another MEC server $j'\in\mathcal{S}\setminus j$.
Given that the UEs are more than the MEC servers in our considered network, the complexity of Algorithm \ref{Alg: matching} is in the order of $\mathcal{O}(U^2)$ which increases binomially with the number of UEs.
Additionally, let us consider an exhaustive search in problem {\bf P1-1}. 
Since each UE $i\in\mathcal{U}$ can only access one out of $S$ servers, there are  $S^U$ association choices satisfying constraint \eqref{Eq: UE-server association constraint}.
In the exhaustive search approach, the complexity increases exponentially with the number of UEs.

\subsection{Resource Allocation and Task Splitting at the UE Side}

Now denoting the representative UE $i$'s associated MEC server as $j^*$,  we formulate the UEs' resource allocation problem of {\bf P1}  as
%
%
%
%\begin{subequations}\label{Eq: central UE problem}
\begin{IEEEeqnarray*}{rcl}
\mbox{\bf P1-2:}~~&\underset{\mathbf{f}(t),\mathbf{P}(t)}{\mbox{minimize}}&~~
\sum_{i\in\mathcal{U}}\bigg[\big(\beta_{j^{*}i}(t)-\beta_{i}^{\rm O}(t)\big) \tau W \log_2\bigg(1\notag
\\&&~~+\frac{P_{i}(t)h_{ij^{*}}(t)}{N_0W+\sum\limits_{i'\in\mathcal{U}\setminus i}\eta_{i'j^{*}}P_{i'}(t)h_{i'j^{*}}(t)}\bigg) \notag
\\&&~~- \frac{\beta_{i}^{\rm L}(t)\tau f_{i}(t)}{L_i}+ V\big(\kappa [f_i(t)]^3+P_{i}(t)\big)\bigg]
\\&\mbox{subject to}&~~\eqref{Eq: UE resource constraint}.\notag
\end{IEEEeqnarray*}
%\end{subequations}
%
%
%
%
%
%
%
When solving problem {\bf P1-2} in a centralized manner,  each UE $i$ needs to upload its local information $\beta_{i}^{\rm L}(t)$ and $\beta_{i}^{\rm O}(t)$  in every time slot to a central unit, e.g., the associated server.  This can incur high overheads, especially in dense networks.
In order to alleviate this issue, we decompose the summation (over all UEs) in the objective and let each UE $i\in\mathcal{U}$  locally allocate its CPU-cycle frequency and transmit power by solving
%
%
%
%\begin{subequations}\label{Eq: local UE problem}
\begin{IEEEeqnarray*}{rcl}
\mbox{\bf P1-2':}~~&\underset{f_i(t),P_i(t)}{\mbox{minimize}}&~~
\big(\beta_{j^{*}i}(t)-\beta_{i}^{\rm O}(t)\big)\tau W\notag
\\& &~~\times\mathbb{E}_{I_{ij^{*}}}\bigg[\log_2\bigg(1+\frac{P_{i}(t)h_{ij^{*}}(t)}{N_0W+I_{ij^{*}}(t)}\bigg)\bigg]\notag
 \\ &&~~ - \frac{\beta_{i}^{\rm L}(t)\tau f_{i}(t)}{L_i}+V\big(\kappa [f_i(t)]^3+P_{i}(t)\big)
 %\label{Eq: local UE problem-1}
%
%
\\&\mbox{subject to}&~~\eqref{Eq: UE resource constraint}.
\end{IEEEeqnarray*}
%\end{subequations}
%
%
%
The expectation is with respect to the locally-estimated distribution, $\hat{\Pr}\big(I_{ij};t\big)$,  of the aggregate interference  $I_{ij^*}(t)=\sum_{i'\in\mathcal{U}\setminus i}\eta_{i'j^{*}}P_{i'}(t)h_{i'j^{*}}(t)$. 
Note that when $V=0$, {\bf P1-2'} is equivalent to the rate maximization problem, where the power budget will be fully allocated. As $V$ is gradually increased, we pay more attention on power cost reduction, and the solution values will decrease correspondingly.
Moreover, although the downlink is not considered in this work, we implicitly assume that the UE has those executed tasks and can locally track the state of the offloaded-task queue. In other words,   the full information about $\beta_{j^{*}i}(t)$ is available at the UE.
 \begin{lemma}\label{lemma power}
 
The optimal solution to problem {\bf P1-2'} is that
UE $i$ allocates the CPU-cycle frequency
\begin{align*}
 f^{*}_i(t)=\sqrt{\frac{ \beta_{i}^{\rm L}(t)\tau }{ 3L_i\kappa(V+\gamma^*)}}
\end{align*}
for local computation. The optimal allocated transmit  power $P^{*}_{i}>0$ satisfies
\begin{align}\notag
  \mathbb{E}_{I_{ij^{*}}}\bigg[\frac{\big(\beta_{i}^{\rm O}(t)-\beta_{j^{*}i}(t)\big)\tau W h_{ij^{*}}}{(N_0W+I_{ij^{*}}+P^{*}_{i}h_{ij^{*}})\ln 2}\bigg]
 =V+\gamma^{*}
\end{align}
if $  \mathbb{E}_{I_{ij^{*}}}\Big[\frac{(\beta_{i}^{\rm O}(t)-\beta_{j^{*}i}(t))\tau W h_{ij^{*}}}{(N_0W+I_{ij^{*}})\ln 2}\Big]>V+\gamma^*$. Otherwise,  $P^{*}_{i}=0$. 
Furthermore, the optimal Lagrange multiplier $\gamma^*$ is 0 if $\kappa [f^*_i(t)]^3+P^*_{i}(t)< P_{i}^{\max}$. When $\gamma^*> 0$,
$\kappa [f^*_i(t)]^3+P^*_{i}(t)= P_{i}^{\max}$.
\end{lemma}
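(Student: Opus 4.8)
The plan is to solve {\bf P1-2'} as a (per-UE) convex program through its Karush--Kuhn--Tucker (KKT) conditions. First I would verify convexity and identify the coupling. The term $\kappa[f_i(t)]^3$ is convex on $f_i(t)\geq 0$, the penalty $V(\kappa[f_i(t)]^3+P_i(t))$ is convex, and $-\beta_i^{\rm L}(t)\tau f_i(t)/L_i$ is affine; in the offloading regime of interest $\beta_i^{\rm O}(t)\geq\beta_{j^*i}(t)$ the coefficient $\beta_{j^*i}(t)-\beta_i^{\rm O}(t)$ is nonpositive, so $(\beta_{j^*i}(t)-\beta_i^{\rm O}(t))\log_2(1+P_i(t) h_{ij^*}/(N_0W+I_{ij^*}))$ is a nonpositive multiple of a concave function and hence convex in $P_i(t)$. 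Since expectation preserves convexity, the whole objective is convex over the convex feasible set \eqref{Eq: UE resource constraint}, and the only coupling between $f_i(t)$ and $P_i(t)$ is the budget $\kappa[f_i(t)]^3+P_i(t)\leq P_i^{\max}$.

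Next I would attach a multiplier $\gamma\geq 0$ to the budget constraint, form the Lagrangian, and treat the nonnegativity constraints $f_i(t)\geq 0,\ P_i(t)\geq 0$ directly through the stationarity and boundary conditions. Stationarity in $f_i(t)$ reads $-\beta_i^{\rm L}(t)\tau/L_i+3\kappa(V+\gamma)[f_i(t)]^2=0$, which I solve for the closed form $f_i^*(t)=\sqrt{\beta_i^{\rm L}(t)\tau/(3L_i\kappa(V+\gamma^*))}$ (well defined because $\beta_i^{\rm L}(t)\geq 0$ and $V+\gamma^*>0$). Stationarity in $P_i(t)$, using $\tfrac{d}{dP_i}\log_2(1+P_i h/(N_0W+I))=h/[(N_0W+I+P_i h)\ln 2]$ and moving the expectation through the derivative, yields exactly the fixed-point equation $\mathbb{E}_{I_{ij^*}}[(\beta_i^{\rm O}(t)-\beta_{j^*i}(t))\tau W h_{ij^*}/((N_0W+I_{ij^*}+P_i^*h_{ij^*})\ln 2)]=V+\gamma^*$ asserted in the statement.

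I would handle the boundary case by examining the $P_i$-derivative of the objective at $P_i(t)=0$. Since the objective is convex in $P_i(t)$ and the quantity $\mathbb{E}_{I_{ij^*}}[h_{ij^*}/((N_0W+I_{ij^*}+P_i(t)h_{ij^*})\ln 2)]$ is strictly decreasing in $P_i(t)$, an interior optimum $P_i^*>0$ exists if and only if this derivative is negative at $P_i(t)=0$, i.e. if and only if $\mathbb{E}_{I_{ij^*}}[(\beta_i^{\rm O}(t)-\beta_{j^*i}(t))\tau W h_{ij^*}/((N_0W+I_{ij^*})\ln 2)]>V+\gamma^*$; otherwise projecting onto $\{P_i\geq 0\}$ forces $P_i^*=0$. (In the complementary regime $\beta_i^{\rm O}(t)<\beta_{j^*i}(t)$ the coefficient is positive, the left side is negative, the threshold fails automatically, and $P_i^*=0$, which is consistent.) Finally, complementary slackness $\gamma^*(\kappa[f_i^*]^3+P_i^*-P_i^{\max})=0$ with $\gamma^*\geq 0$ delivers the last assertion: $\gamma^*=0$ when the budget is slack, and $\kappa[f_i^*]^3+P_i^*=P_i^{\max}$ whenever $\gamma^*>0$.

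The main obstacle I anticipate is not the stationarity algebra but closing the loop on $\gamma^*$: both $f_i^*$ and $P_i^*$ depend on $\gamma^*$, which is itself pinned down only implicitly through the budget constraint. I would argue that $f_i^*(\gamma)$ and $P_i^*(\gamma)$ are both continuous and nonincreasing in $\gamma$ (the former by inspection, the latter because raising $\gamma$ increases the right-hand side $V+\gamma$ of the fixed-point equation while the left-hand side is strictly decreasing in $P_i$), so the total power $\kappa[f_i^*(\gamma)]^3+P_i^*(\gamma)$ is monotone in $\gamma$. This monotonicity guarantees that a unique $\gamma^*\geq 0$ satisfying complementary slackness exists and can be located by a one-dimensional (e.g. bisection) search, while the strict monotonicity of the interference expectation in $P_i$ makes the resulting optimizer unique.
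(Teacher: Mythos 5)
Your proposal is correct and follows essentially the same route as the paper's proof: form the Lagrangian of {\bf P1-2'}, apply the KKT conditions to get the closed-form $f_i^*(t)$ and the fixed-point equation for $P_i^*(t)$, use the sign of the derivative at $P_i(t)=0$ for the threshold condition, and invoke complementary slackness for $\gamma^*$. The only differences are cosmetic or additive: you fold the nonnegativity constraints into the boundary analysis rather than introducing explicit multipliers $\alpha_1,\alpha_2$ as the paper does, and you go slightly beyond the paper by flagging the convexity caveat when $\beta_{j^*i}(t)>\beta_i^{\rm O}(t)$ and by arguing monotonicity in $\gamma$ to justify existence and computation of $\gamma^*$.
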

\begin{proof}
Please refer to Appendix \ref{Lem: opt pow}.
\end{proof}

\begin{algorithm}[t]
  \caption{UE-Server Association by Many-to-One Matching with Externalities}
  \begin{algorithmic}[1]
    \State Initialize $\eta(i)=\operatorname{argmax}_{j\in\mathcal{S}} \big\{\mathbb{E}[h_{ij}] \big\},\forall\,i\in\mathcal{U}$.
    \State  Calculate \eqref{Eq: UE utility} and \eqref{Eq: server utility}.
    \Repeat 
 \State   Select  a pair of UEs $(i,i')$  or a UE $i$ with an open spot $i'$ of server $j'$. 
      \If{$(i,i')$ is a swap-blocking pair of the current matching $\eta$}
      \State Update $\eta\gets\eta_{ij}^{i'j'} $.
      \State   Calculate \eqref{Eq: UE utility} and \eqref{Eq: server utility}.
    \EndIf
    \Until{No swap-blocking pair exists  in the current matching $\eta$.}
    \State Transfer  $\eta$ to the UE-server association indicator  as per \eqref{Eq: matching-association}.
  \end{algorithmic}
  \label{Alg: matching}
\end{algorithm}

The second decomposed problem {\bf P2} given in  \eqref{Eq: Lyapunov bound} determines whether the arrival task is assigned to the local-computation queue or task-offloading queue. In this regard, each UE $i\in\mathcal{U}$ solves
%
%\begin{subequations}\label{Eq: Task splitting problem}
\begin{IEEEeqnarray*}{rcl}
\mbox{\bf P2:}~~&\underset{A^{\rm L}_{i}(t), A^{\rm O}_{i}(t)}{\mbox{minimize}}&~~\beta_{i}^{\rm L}(t)
  A^{\rm L}_{i}(t)+\beta_{i}^{\rm O}(t)A^{\rm O}_{i}(t)
\\&\mbox{subject to}&~~A_i(t)=A^{\rm L}_{i}(t)+ A^{\rm O}_{i}(t),
\\&&~~A^{\rm L}_{i}(t),A^{\rm O}_{i}(t)\in\{0,A_{\rm unit},2A_{\rm unit},\cdots\}.
\end{IEEEeqnarray*}
%\end{subequations}
%
%
%
%
%
We can straightforwardly find  an optimal solution $(A^{\rm L *}_{i}(t),A^{\rm O *}_{i}(t))$ as
\begin{equation}\label{Eq: Task split rule}
\begin{cases}
\big(A^{\rm L *}_{i}(t),A^{\rm O *}_{i}(t)\big)=(A_i(t),0),&\mbox{if }\beta_{i}^{\rm L}(t)\leq\beta_{i}^{\rm O}(t),
\\\big(A^{\rm L *}_{i}(t),A^{\rm O *}_{i}(t)\big)=(0,A_i(t)),&\mbox{if }\beta_{i}^{\rm L}(t)>\beta_{i}^{\rm O}(t).
%\\\big(A^{\rm L *}_{i}(t),A^{\rm O *}_{i}(t)\big)=\Big(\Big\lceil\frac{A_i(t)}{2A_{\rm unit}}\Big\rceil\times A_{\rm unit},\Big\lfloor\frac{A_i(t)}{2A_{\rm unit}}\Big\rfloor \times A_{\rm unit}\Big),&\mbox{if }\beta_{i}^{\rm L}(t)=\beta_{i}^{\rm O}(t).
\end{cases}
\end{equation}

\subsection{Computational Resource Scheduling at the Server Side}\label{Sec: Server resource scheduling}
The third decomposed problem {\bf P3} given in \eqref{Eq: Lyapunov bound} is addressed at the server side, where each MEC server $j\in\mathcal{S}$ solves
%
%
%
%
%\begin{subequations}\label{Eq: server problem}
\begin{IEEEeqnarray*}{rcl}
\mbox{\bf P3:}~~&\underset{\mathbf{f}_j(t)}{\mbox{maximize}}&~~\sum_{i\in\mathcal{U}}\frac{ \beta_{ji}(t) f_{ji}(t)}{L_i} \label{Eq: server problem-1}
\\&\mbox{subject to}&~~\eqref{Eq: Server resource constraint},
\end{IEEEeqnarray*}
%\end{subequations}
%
%
%
to schedule its CPU cores. 
Note that the server considers the allocated CPU-cycle frequencies to all UEs  in the objective, constraints, and optimization variables even though some UEs are not associated in the current time frame $n$. However, since  the corresponding weight of the UE's offloaded-task queue, i.e., $\beta_{ji}(t)$, is taken into account in the objective, the resource allocation at the server will not be over-provisioned. These insights are illustrated as follows.
\begin{algorithm}[t]
  \caption{Computational Resource Scheduling at the Server}
  \begin{algorithmic}[1]
    \State Initialize $n=1$, $\mathcal{U}_j=\{i\in\mathcal{U}|\beta_{ji}(t)>0\}$, and $f_{ji}=0,\forall\,i\in\mathcal{U}$.
    \While{$n\leq  N_j$ and $\mathcal{U}_j\neq \emptyset$}
      \State Find $i^*=\operatorname{argmax}_{i\in\mathcal{U}_j} \big\{\beta_{ji}(t)/L_i \big\}$.
      \State Let $f_{ji^{*}}(t)=f_{j}^{\max}$.
      \State Update $n\gets n+1$ and $\mathcal{U}_j\gets\mathcal{U}_j\setminus i^{*}$.
      \EndWhile
  \end{algorithmic}\label{Alg: cloud}
\end{algorithm}

Assuming that the server is not able to complete a UE's  offloaded tasks in the previous time frame $n-1$ and is not associated with  this UE in the current time frame $n$. Although the offloaded-task queue length $Z_{ji}(t)$ does not grow anymore, those incomplete tasks will experience severe delay if they are ignored in the current time frame. Furthermore,  since $\tilde{R}_{ij}(t)$ that is considered in the URLLC constraints decreases in the current frame,  the weight $\beta_{ji}(t)$  grows as per  \eqref{Eq: virtual GPD_server-1}, \eqref{Eq: virtual GPD_server-2},  \eqref{Eq: virtual violation_server-1}, and  \eqref{Eq: weight server}. In other words, the more severe the experienced delay of the ignored UE's incomplete tasks, the higher the weight $\beta_{ji}(t)$.
To address this severe-delay issue, the server takes into account all UEs via $\beta_{ji}(t)$ in the objective. Once the offloaded tasks are completed,  $\beta_{ji}(t)$ remains zero in the rest time slots. In addition, for the UEs which have not been associated with this server, we have $\beta_{ji}(t)=0$. Therefore, the server only considers the UEs with $\beta_{ji}(t)>0$ while scheduling the computational resources.
In summary, the optimal solution to problem {\bf P3} is that server $j$ dedicates its CPU cores to, at most,  $N_j$ UEs with largest positive values of $b_{j i}(t)/L_i$. 
The steps of allocating the server's CPU cores are detailed in Algorithm \ref{Alg: cloud}.

 \begin{algorithm}[t]
  \caption{Two-Timescale Mechanism for UE-Server Association, Task Offloading, and Resource Allocation}
  \begin{algorithmic}[1]
    \State Initialize $t=0$, predetermine the system lifetime as $T$, and set the initial queue values of \eqref{Eq: local task queue}, \eqref{Eq: offloading task queue},  \eqref{Eq: computation queue}, and \eqref{Eq: virtual GPD_Loc-1}--\eqref{Eq: virtual GPD_server-2} and \eqref{Eq: virtual violation_Loc-1}--\eqref{Eq: virtual violation_server-1} as zero.
      \Repeat
      \If{$t/T_0\in\mathbb{N}$}
      \State Run Algorithm \ref{Alg: matching} to associate the UEs with the MEC servers.
    \EndIf
    \State The UE allocates the local CPU-cycle frequency and transmit power as per Lemma \ref{lemma power}, and splits the task arrivals for local computation and offloading according to \eqref{Eq: Task split rule}.
    \State The server schedules its computational resources by following Algorithm \ref{Alg: cloud}.
    \State The UE updates the queue lengths in  \eqref{Eq: local task queue}, \eqref{Eq: offloading task queue}, \eqref{Eq: virtual GPD_Loc-1}--\eqref{Eq: virtual GPD_Off-2}, \eqref{Eq: virtual violation_Loc-1}, and \eqref{Eq: virtual violation_Off-1}.
      \State The server updates  the queue lengths in  \eqref{Eq: computation queue}, \eqref{Eq: virtual GPD_server-1}, \eqref{Eq: virtual GPD_server-2}, and \eqref{Eq: virtual violation_server-1}.
\State    Update $t\gets t+1$.
\Until{$t>T$}
  \end{algorithmic}\label{Alg: Two timescale}
\end{algorithm}

After computing and offloading the tasks in time slot $t$, each UE updates its physical and virtual queue lengths in  \eqref{Eq: local task queue}, \eqref{Eq: offloading task queue}, \eqref{Eq: virtual GPD_Loc-1}--\eqref{Eq: virtual GPD_Off-2}, \eqref{Eq: virtual violation_Loc-1}, and \eqref{Eq: virtual violation_Off-1}
 when the MEC servers update \eqref{Eq: computation queue}, \eqref{Eq: virtual GPD_server-1}, \eqref{Eq: virtual GPD_server-2}, and \eqref{Eq: virtual violation_server-1} for the next slot $t+1$. Moreover, based on the transmission rate $R_{ij}(t)$, the UE empirically estimates the statistics of $I_{ij}$ for slot $t+1$ as per $ \hat{\Pr}\big(\tilde{I}_{ij};t+1\big)= \frac{\mathbbm{1}_{\{I_{ij}(t)=\tilde{I}_{ij}\}}}{t+2}+ \frac{(t+1)\hat{\Pr}(\tilde{I}_{ij};t)}{t+2}$.
The procedures of the proposed two-timescale mechanism are outlined in Algorithm \ref{Alg: Two timescale}.

\section{Numerical Results}\label{Sec: results}

\begin{table}[t]
\caption{Simulation Parameters \cite{FogPower,JSAC_queue,Gilsoo_ICC17,Al16,Mie2010}}\label{Tab: parameters}
\centering
 \begin{tabular}{|m{1.2cm}|m{3.2cm}||m{1.2cm}|m{1.4cm}|}
\hline
{\bf Parameter} &{\bf  Value }& {\bf Parameter} &{\bf  Value } \\
\hline
  \hline
  $T_0$ &100&$A_{\rm unit}$&1500 bytes\\
  \hline
$\kappa $& $10^{-27}\,\mbox{Watt}\cdot\mbox{s}^3/\mbox{cycle}^3$&$W$&10\,MHz\\
  \hline
 $N_0$&-174\,dBm/Hz&$P_{i}^{\max}$&30\,dBm\\
\hline
 $L_i$&$[1\times 10^3,4\times 10^4]\,\mbox{cycle/bit}$&  $V$&0\\
\hline
$\lambda_i$&$[10,150]$\,kbps&$f_{j}^{\max}$&$10^{10}$\,cycle/s\\
\hline
 $d^{\rm L}_{i}$&$100\tilde{A}_{i}^{\rm L}(t-1)$ (bit)&$\epsilon^{\rm L}_{i}$&0.01\\
\hline
$\sigma_i^{\rm L, th}$&40 (Mbit)&$\xi_i^{\rm L, th}$&0.3\\
 \hline
 $d^{\rm O}_{i}$&$100\tilde{A}_{i}^{\rm O}(t-1)$ (bit)&$\epsilon^{\rm O}_{i}$&0.01\\
\hline
$\sigma_i^{\rm O, th}$&40 (Mbit)&$\xi_i^{\rm O, th}$&0.3\\
\hline
$d_{ji}$&20\,sec& $\epsilon_{ji}$&0.01\\
\hline
  $\sigma_{ji}^{\rm th}$&40 (Mbit)&  $\xi_{ji}^{\rm th}$&0.3\\
  \hline
\end{tabular}
\end {table}
We consider an  indoor $100\times 100\,\mbox{m}^2$ area in which four MEC servers are deployed at the centers of  four equal-size 
quadrants, respectively. Each server is equipped with eight CPU cores. Additionally, multiple UEs, ranging from 30 to 80, are randomly distributed. 
For task offloading, we assume that the transmission frequency is 5.8\,GHz with the path loss model $24\log x+20\log 5.8+60$ (dB) \cite{rpt:itu_indoor}, where $x$ in meters is the distance between any UE and server.   Further,  all wireless channels experience Rayleigh fading with unit variance. Coherence time is 40\,ms \cite{coherence_time}. Moreover, we consider Poisson processes for task arrivals. 
The rest parameters for all UEs  $i\in\mathcal{U}$ and servers $j\in\mathcal{S}$ are listed in Table \ref{Tab: parameters}.\footnote{$\tilde{A}_{i}^{\rm L}(t-1)=\frac{1}{t}\sum\limits_{\tau=0}^{t-1}A_{i}^{\rm L}(\tau)$ and $\tilde{A}_{i}^{\rm O}(t-1)=\frac{1}{t}\sum\limits_{\tau=0}^{t-1}A_{i}^{\rm O}(\tau)$.}
\begin{figure}[t]
\centering
\subfigure[Tail distributions of the excess value of a UE's  task-offloading queue and the approximated GPD of exceedances.]{\includegraphics[width=\columnwidth]{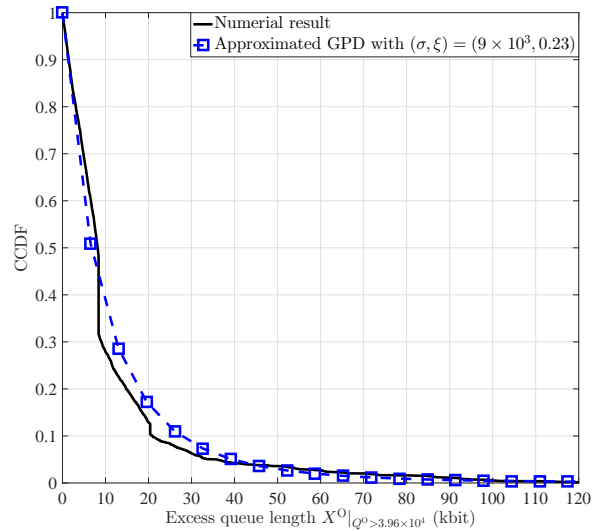}}
\subfigure[Convergence of the approximated GPD scale and shape parameters of exceedances.]{\includegraphics[width=\columnwidth]{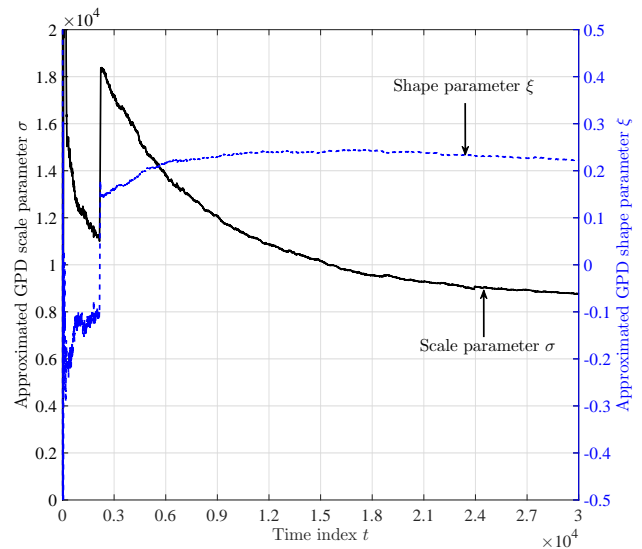}}
	\caption{Effectiveness of applying the Pickands–Balkema–de Haan theorem in the MEC network.}
\label{Fig3}
\end{figure}

We first verify the effectiveness of using the Pickands–Balkema–de Haan theorem to characterize the excess queue length in Fig.~\ref{Fig3}. Although the task-offloading queue is used to verify the results, the local-computation and offloaded-task queues can be used. 
Let us consider   $L_i=8250$\,cycle/bit, $\lambda_i=100$\,kbps, and  $d_i^{\rm O} = 10\tilde{A}^{\rm O}_i(t-1)$\,bit  for all  30 UEs. 
Then, given $\Pr\big(Q^{\rm O}>10\tilde{A}^{\rm O}(\infty)\big)=3.4\times 10^{-3}$ with $10\tilde{A}^{\rm O}(\infty)=3.96\times 10^4$,
Fig.~\ref{Fig3}(a) shows the CCDFs of exceedances $X^{\rm O}|_{Q^{\rm O}>3.96\times 10^4}=Q^{\rm O}-3.96\times 10^4>0$ and the approximated GPD in which the latter provides a good characterization for exceedances. Further, the convergence of the scale and shape parameters of the approximated GPD is shown in Fig.~\ref{Fig3}(b). Once convergence is achieved, characterizing the statistics of exceedances helps to locally estimate the network-wide extreme metrics, e.g., the maximal queue length among all UEs as in \cite{EVT_converge}, and enables us to proactively deal with the occurrence of extreme events.

\begin{figure}[t]
\centering
\includegraphics[width=\columnwidth]{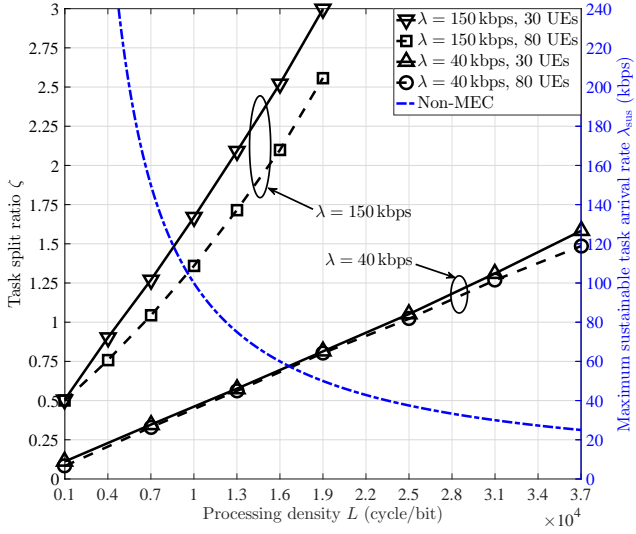}
	\caption{Task split ratio versus processing density.}
\label{Fig:4}
\end{figure}
\begin{figure}[!h]
\centering
\includegraphics[width=\columnwidth]{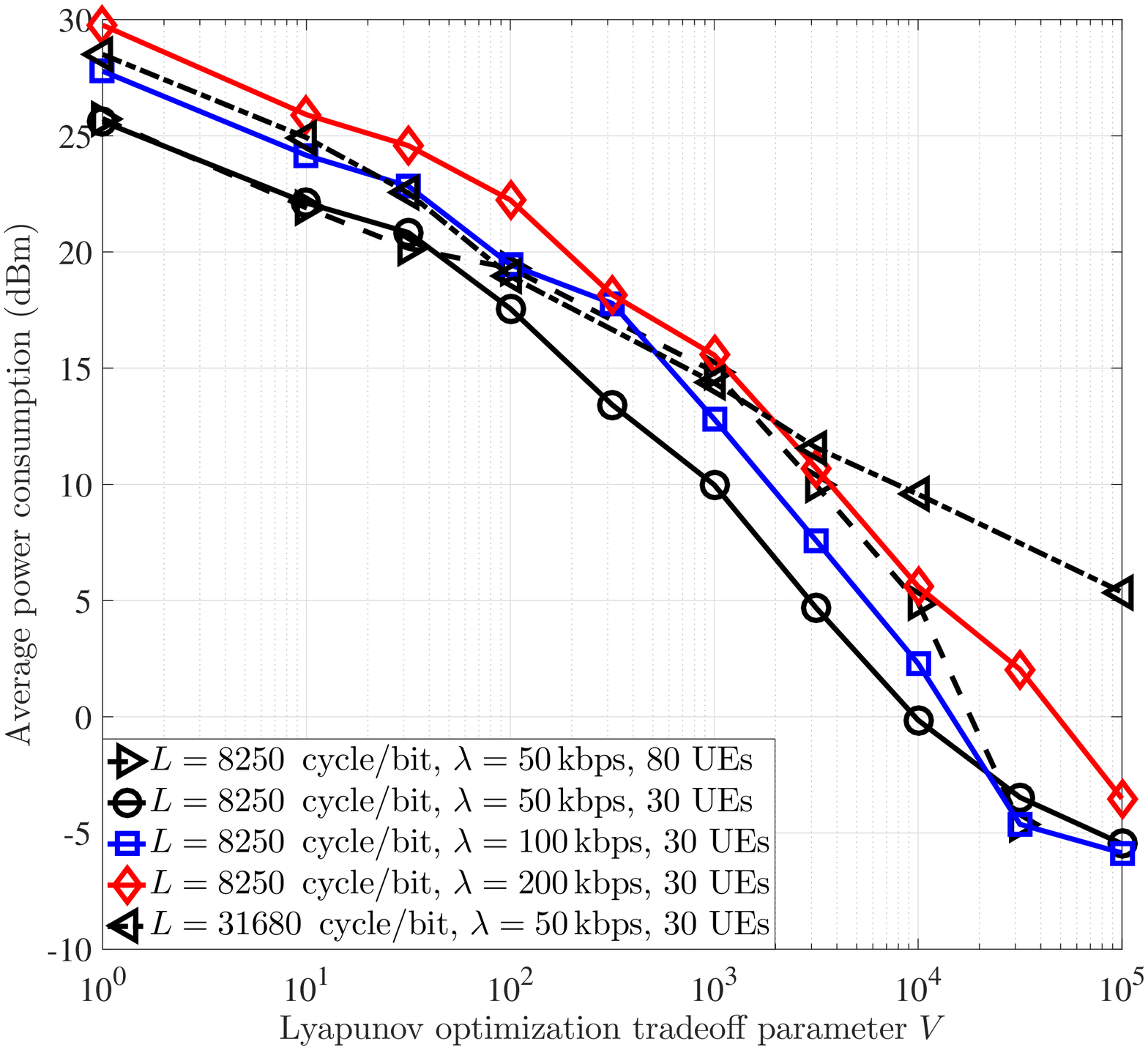}
\caption{Average power consumption versus Lyapunov  optimization tradeoff parameter.}
	\label{Fig:5}
\end{figure}
Subsequently, we measure the ratio between the task amounts split to the task-offloading queue and local-computation queues, i.e., $\zeta=\frac{\tilde{A}_{i}^{\rm O}(\infty)}{\tilde{A}_{i}^{\rm L}(\infty)}$. If $\zeta<1$, the UE pays more attention to local computation. More tasks are offloaded when $\zeta>1$.
As shown in Fig.~\ref{Fig:4}, more fraction of arrival tasks is offloaded for the intenser processing density $L$ or higher task arrival rate $\lambda$. In these situations, the UE's computation capability  becomes less supportable, and extra computational resources are required  as expected.
Additionally, since stronger interference is incurred  in denser networks, the UE lowers the task portion of offloading, especially when more  computational resources of the server are required, i.e., the cases of higher  $L$ and $\lambda$.
In the scenario without MEC servers, the local computation rate cannot be less than the task arrival rate to maintain queue stability, i.e., $10^9/L\geq \lambda$. In this regard, we also plot the curve of  the maximum sustainable task arrival rate $\lambda_{\rm sus}=10^9/L$ without MEC servers in Fig.~\ref{Fig:4}. 
Comparing the curves of the MEC schemes with the maximum sustainable task arrival rate, we can find
\begin{equation*}
\begin{cases}
 \zeta>1,&\mbox{when~} \lambda>10^9/L,
\\  \zeta=1,&\mbox{when~} \lambda=10^9/L,
  \\ \zeta<1,&\mbox{otherwise.}
 \end{cases}
\end{equation*}
That is, $\lambda_{\rm sus}$ is the watershed between task offloading and local computation. More than 50\% of arrival tasks are offloaded if the task arrival rate is higher than $\lambda_{\rm sus}$.

\begin{figure}[t]
\centering
\subfigure[ $L=8250$\,cycle/bit and 30 UEs.]{\includegraphics[width=\columnwidth]{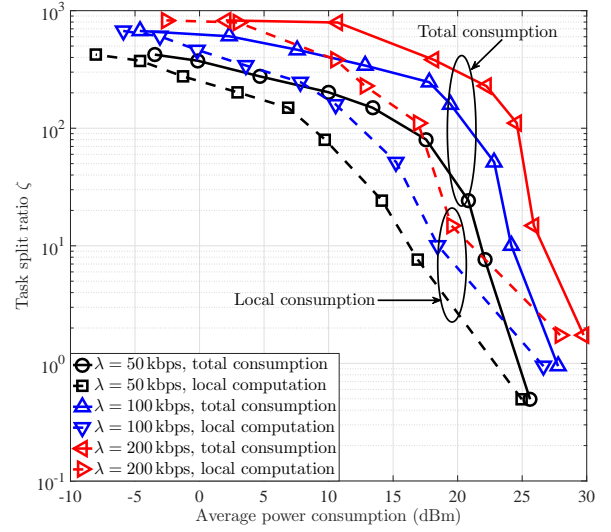}}
\subfigure[$\lambda=50$\,kbps]{\includegraphics[width=\columnwidth]{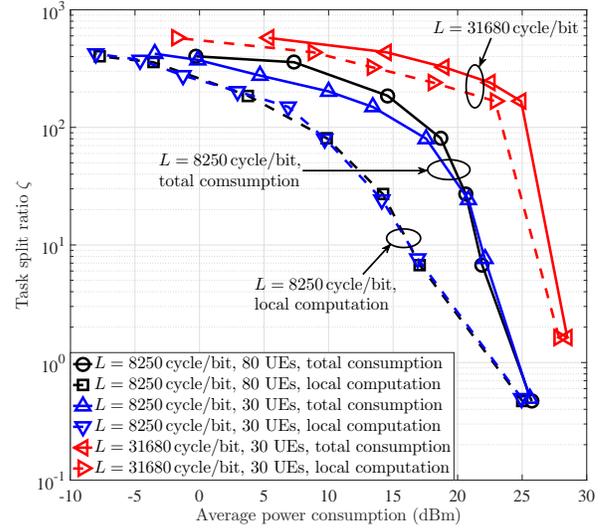}}
	\caption{Task split ratio versus average power costs of local computation and total consumption.}
	\label{Fig:6}
\end{figure}
\begin{figure*}[t]
\centering
\subfigure[Average end-to-end delay.]{\includegraphics[width=\columnwidth]{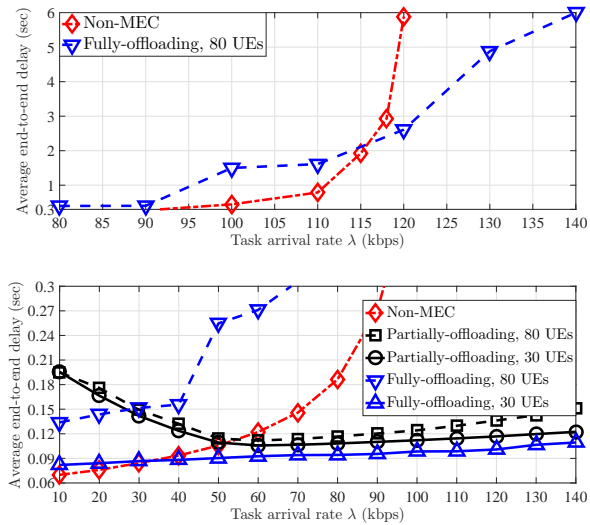}}
\subfigure[99th percentile of the UE's queue length with 80 UEs.]{\includegraphics[width=\columnwidth]{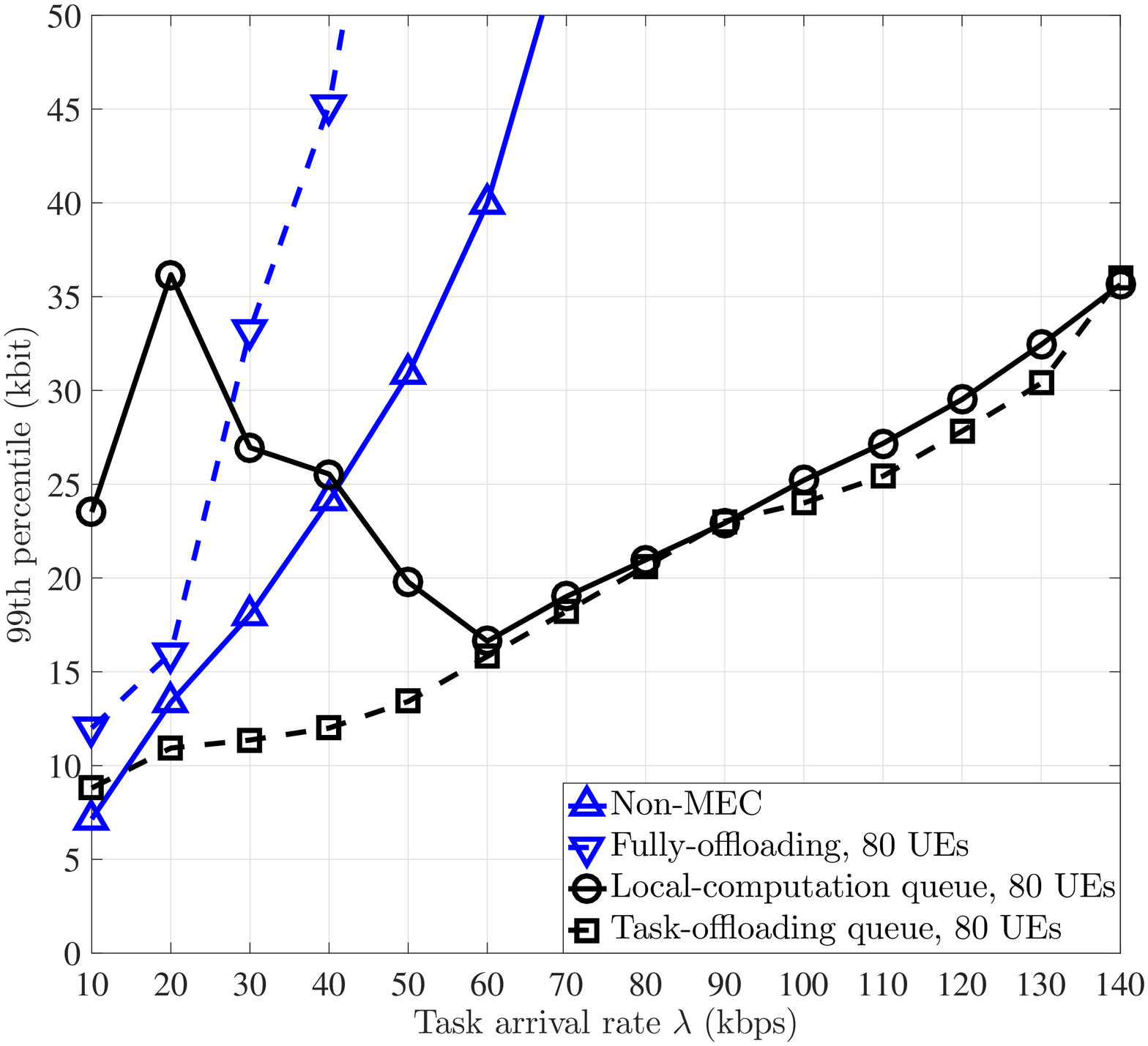}}
\subfigure[99th percentile of the UE's queue length with 30 UEs.]{\includegraphics[width=\columnwidth]{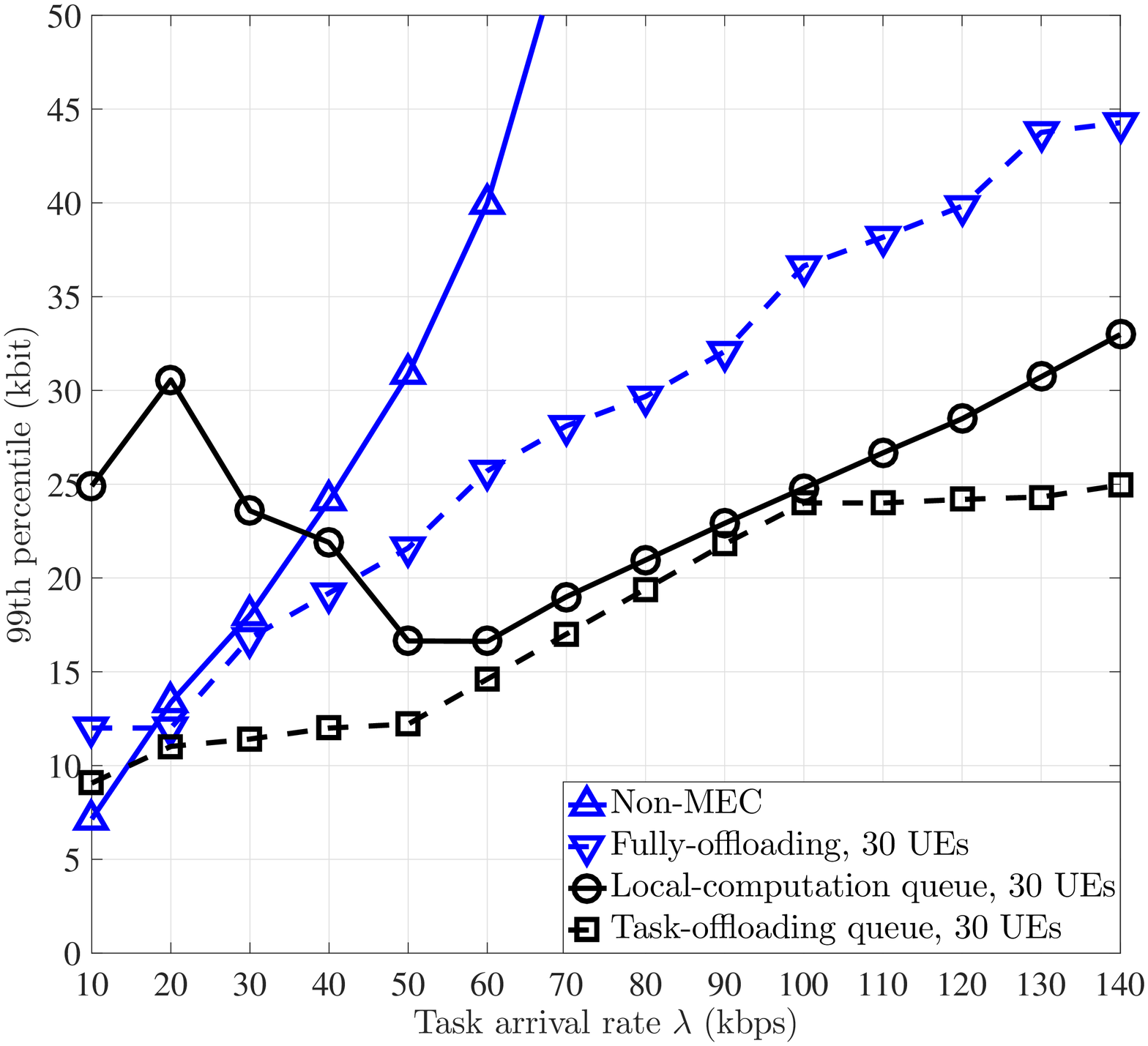}}
\subfigure[Mean and standard deviation of exceedances with 30 UEs.]{\includegraphics[width=\columnwidth]{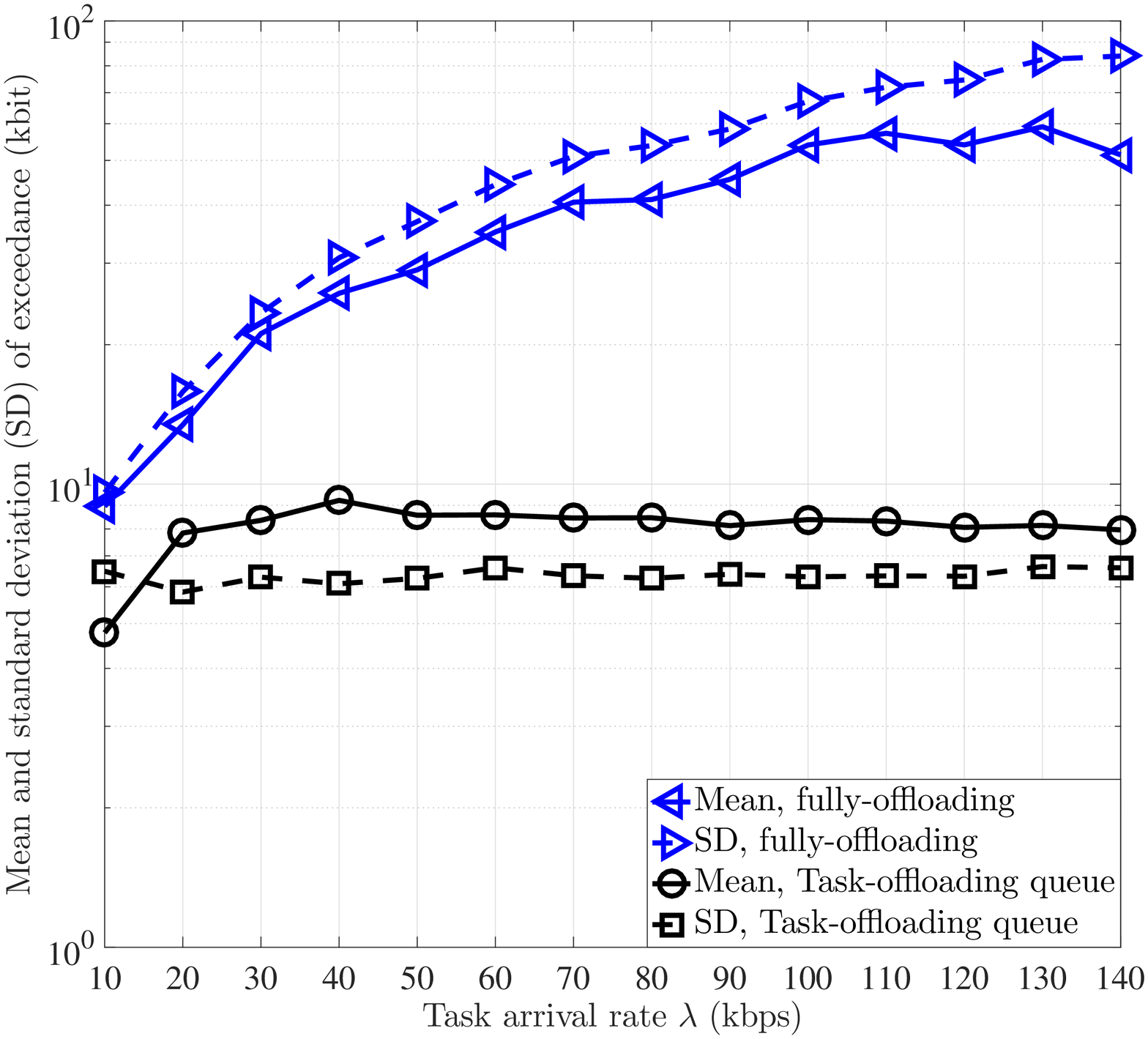}}
\caption{1) Average end-to-end delay,  2) 99th percentile of  the UE's queue length, and 3) mean and standard deviation of exceedances over the 99th percentile queue length,  versus task arrival rate with $L=8250$\,cycle/bit.}
			\label{Fig:7}
\end{figure*}
\begin{figure*}[t]
\centering
\subfigure[Average end-to-end delay.]{\includegraphics[width=\columnwidth]{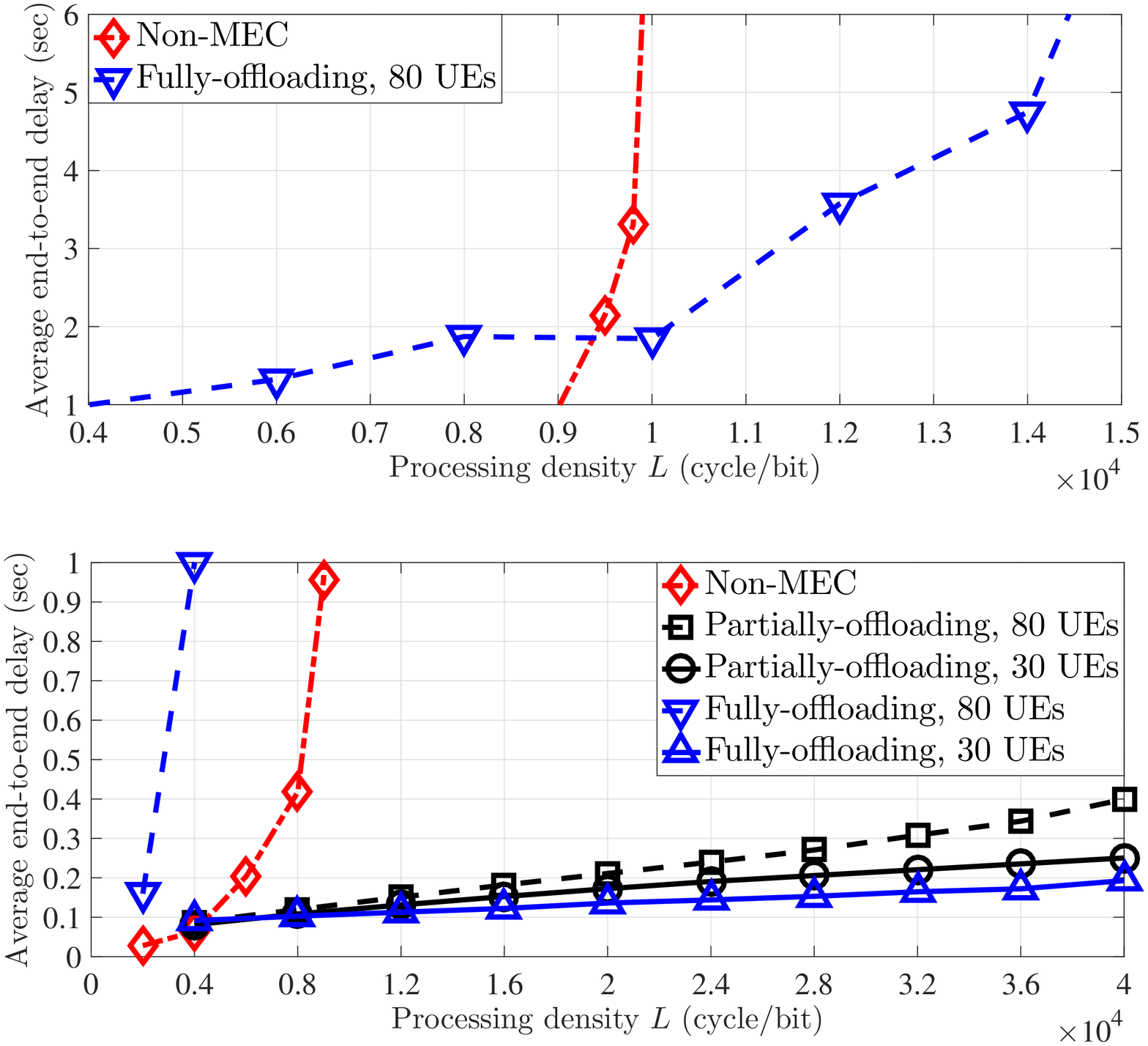}}
\subfigure[99th percentile of the UE's queue length with 80 UEs.]{\includegraphics[width=\columnwidth]{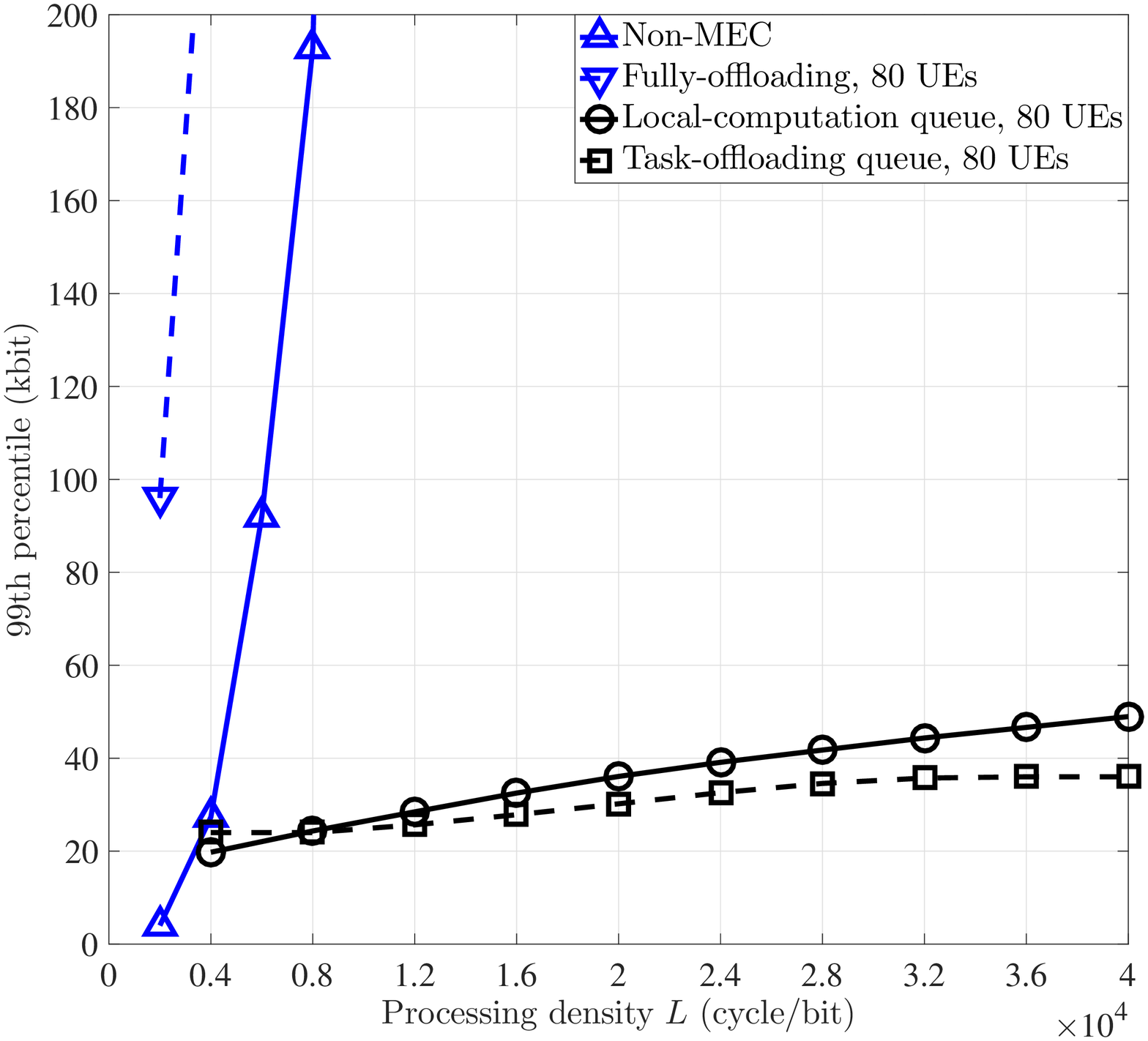}}
\subfigure[99th percentile of the UE's queue length with 30 UEs.]{\includegraphics[width=\columnwidth]{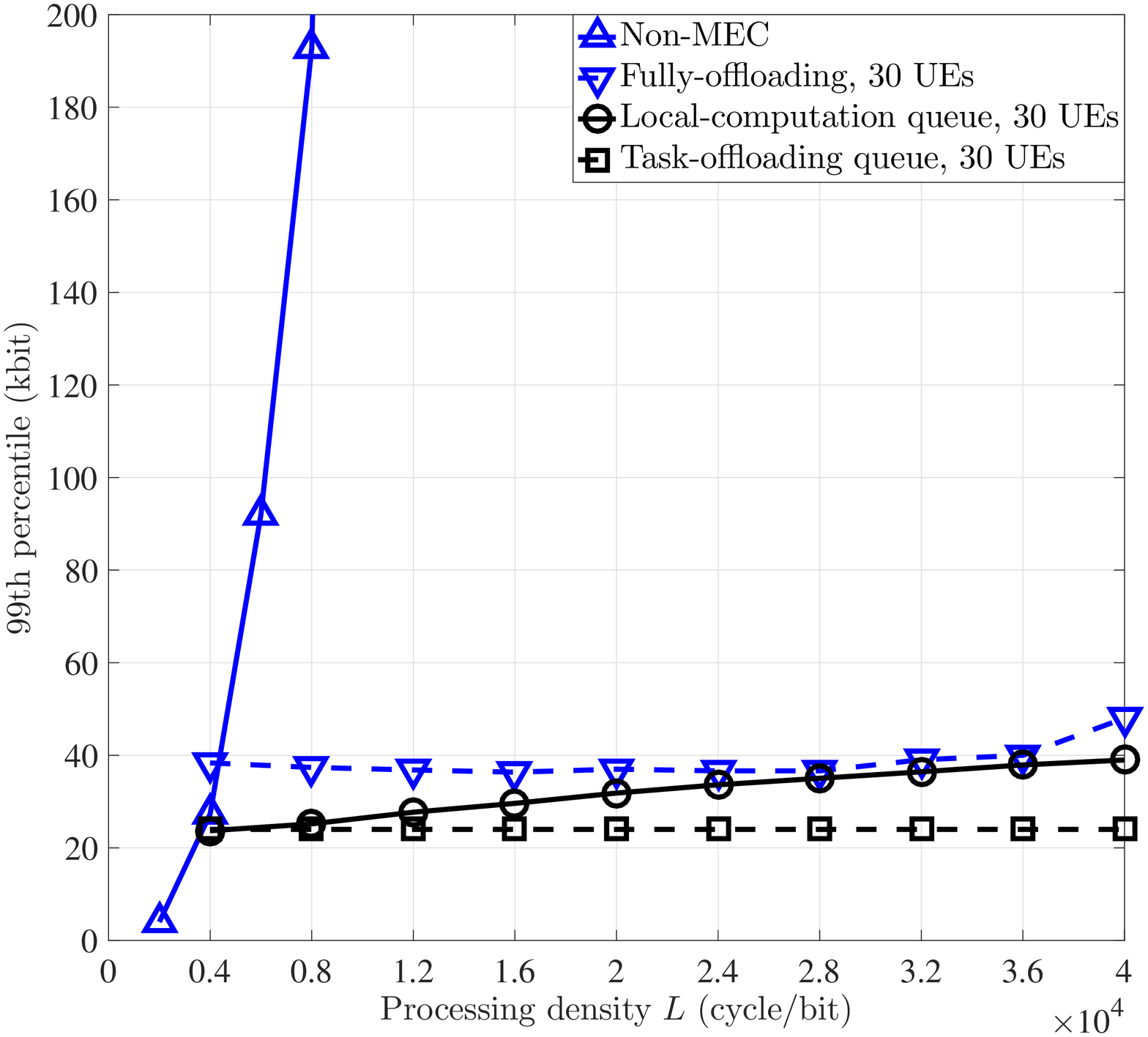}}
\subfigure[Mean and standard deviation of exceedances with 30 UEs.]{\includegraphics[width=\columnwidth]{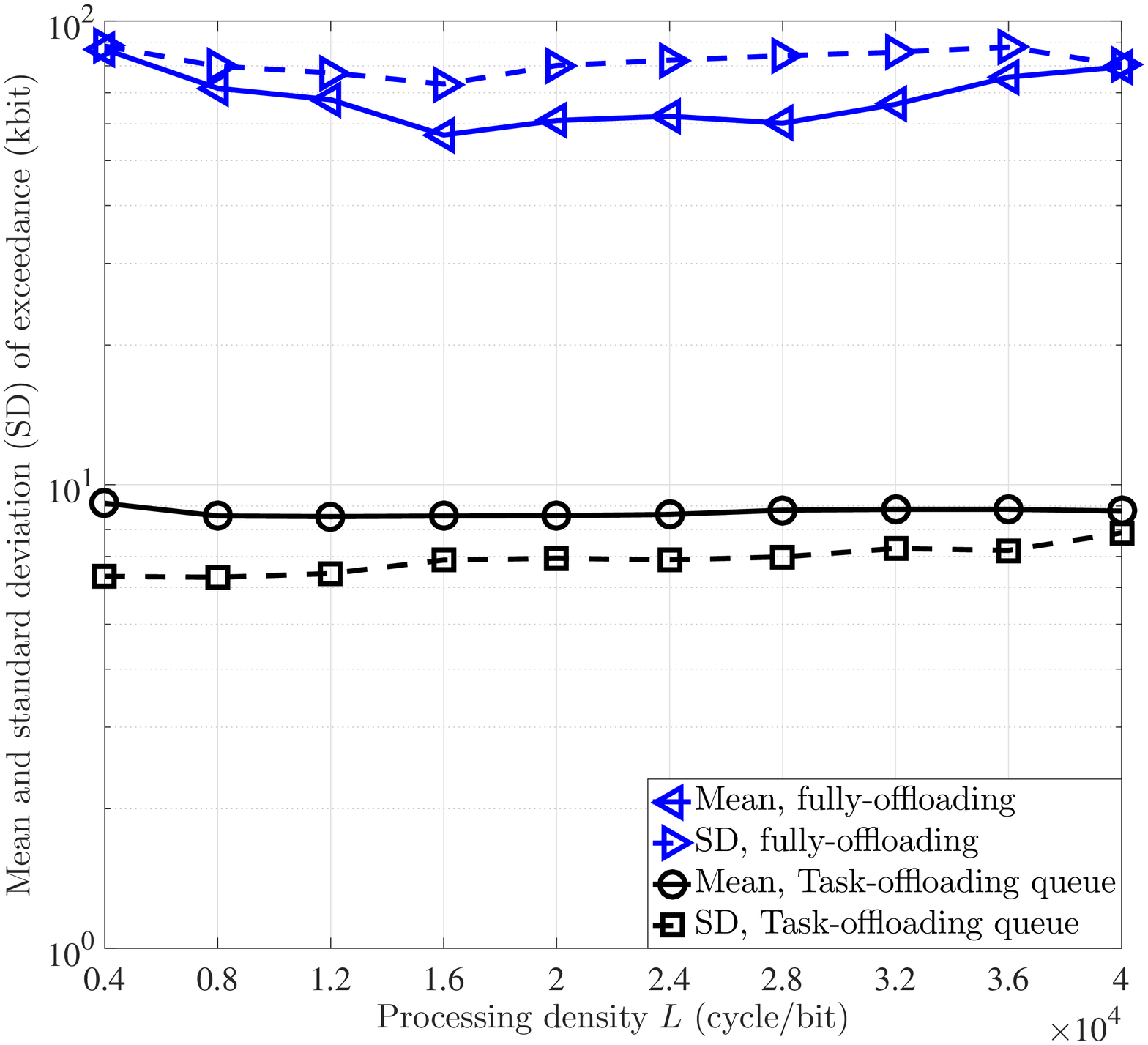}}
	\caption{1) Average end-to-end delay,  2) 99th percentile of the UE's queue length, and 3) mean and standard deviation of exceedances over the 99th percentile queue length, versus  processing density with $\lambda=100$\,kbps.}
		\label{Fig:8}
\end{figure*}

Varying the Lyapunov tradeoff parameter $V,$  we show the corresponding average power consumption in Fig.~\ref{Fig:5} and further break down the power cost in Fig.~\ref{Fig:6}.
As  realized from the objective function of problem {\bf P1-2'}, total power consumption is reduced with increasing $V$  in all network settings, and the minimal power cost can be asymptotically approached.
For any given total power consumption of Fig.~\ref{Fig:5}, we investigate, in Fig.~\ref{Fig:6}, the corresponding task split ratio and the average power consumed by local computation.
When less power is consumed, more tasks are assigned to the task-offloading queue. In other words, if the UE is equipped with weak computation capability or has lower power budget, offloading most tasks (i.e., increasing $\zeta$) helps to meet the URLLC requirements. 
Given that the same fraction of arrival tasks with a specific  processing density $L$ is computed locally, the same portion of power will be consumed in local computation regardless of the arrival rate $\lambda$. In this regard, as shown in Fig.~\ref{Fig:6}(a),  the power consumption gap between local computation and total  consumption is around 5\,dB  at $\zeta=10$  for different values of $\lambda$.
However, computation-intense tasks require higher CPU cycles in local computation. Comparing the curves in Figs.~\ref{Fig:6}(a) and \ref{Fig:6}(b), we can find that the gap is smaller with a larger $L$.
Moreover, since higher rates and more intense processing densities demand more resources for task execution, the UE consumes more power when these two parameters $L$ and $\lambda$ increase as expected.
Additionally, for the specific values of $L$, $\lambda$ and $\zeta$, the locally-executed task amounts and required computational resources in local computation are identical for different numbers of UEs.
Thus,  as shown in Fig.~\ref{Fig:6}(b), the local power consumption is the same regardless of the network density, but each UE consumes more transmit power (and total consumption) due to stronger transmission interference in the denser network.

In addition to the discussed MEC architecture of this work, we consider another two baselines for performance comparison:
\emph{(i)} a baseline with
 no  MEC servers for offloading, and \emph{(ii)} a baseline that
 offloads all the tasks to the MEC server due to the absence of  local computation capability.  In the following part, we compare the performance of the proposed method with these two baselines  for various task arrival rates in Fig.~\ref{Fig:7} and various processing densities in Fig.~\ref{Fig:8}.
We first compare the average end-to-end delay in Figs.~\ref{Fig:7}(a) and \ref{Fig:8}(a). 
For the very small arrival rate and processing density requirement, the UE's computation capability is sufficient to execute tasks rapidly, whereas transmission delay and the extra queuing delay incurred at the server degrade the performance in the fully-offloading  and partially-offloading schemes. 
\begin{figure}[t]
\centering
\subfigure[Four MEC servers with $\{2,4,8,16\}$ CPU cores.]{\includegraphics[width=\columnwidth]{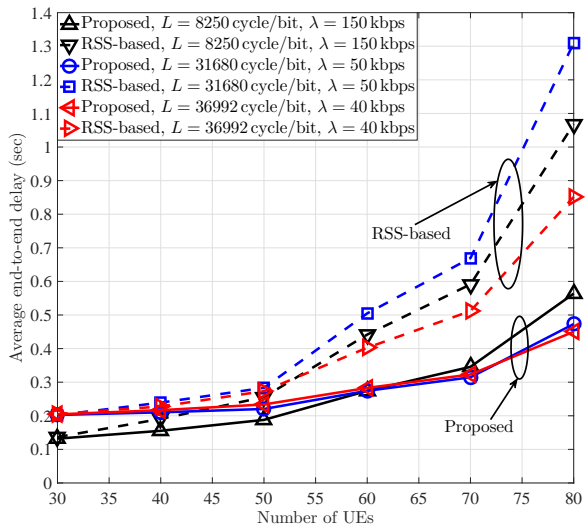}}
\subfigure[Four MEC servers with $\{8,8,8,8\}$ CPU cores.]{\includegraphics[width=\columnwidth]{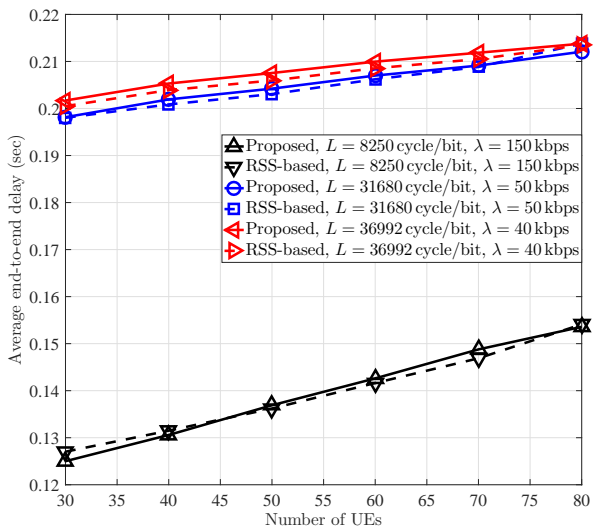}}
\caption{Average end-to-end delay versus number of UEs for different UE-server association schemes.}
	\label{Fig:9}
\end{figure}
\begin{figure}[t]
\centering
\subfigure[Four MEC servers with $\{2,4,8,16\}$ CPU cores.]{\includegraphics[width=\columnwidth]{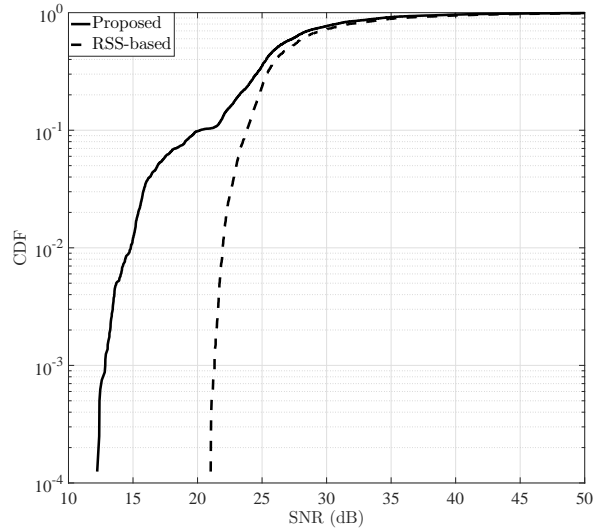}}
\subfigure[Four MEC servers with $\{8,8,8,8\}$ CPU cores.]{\includegraphics[width=\columnwidth]{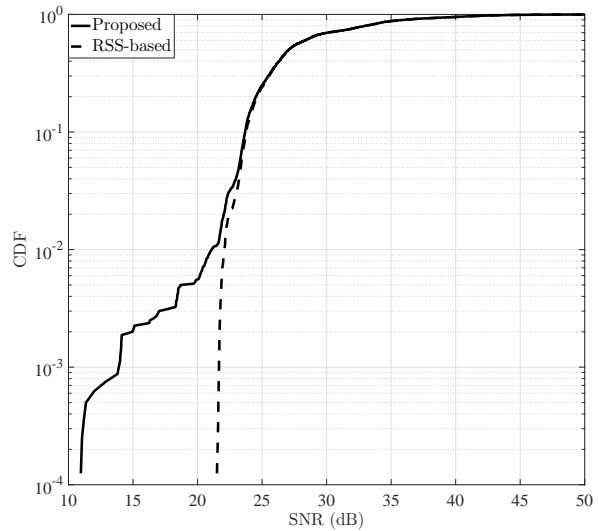}}
\caption{CDF of the wireless channel SNR between the UE and the associated MEC server for different UE-server association schemes with $L=8250$\,cycle/bit, $\lambda=150$\,kbps, and 80 UEs.}
\label{Fig:10}
\end{figure}
While increasing $L$ or $\lambda$, the UE's computation capability becomes less supportable as mentioned previously. The two (fully- and partially-) offloading schemes will eventually provide better delay performance in various network settings.
Additionally,  for the offloading schemes, the average end-to-end delay is larger in the denser network due to stronger interference and longer waiting time for the server's computational resources. Compared with the fully-offloading schemes, our approach has a remarkable delay performance improvement in the denser network since the fully-offloaded tasks incur tremendous queuing delay.
In addition to the average end-to-end delay, we also show the 99th percentile of the UE's  queue length, i.e., $q_{99}\coloneqq F^{-1}_Q(0.99)$, as a reliability measure in Figs.~\ref{Fig:7}(b), \ref{Fig:7}(c), \ref{Fig:8}(b), and \ref{Fig:8}(c). 
When the proposed approach outperforms the non-MEC scheme in terms of the average end-to-end delay, the local-computation queue  has a lower 99th percentile than the 99th percentile queue length of the non-MEC scheme. Similar results can be found for the task-offloading queue and the queue of the fully-offloading scheme if our proposed approach has a lower average end-to-end delay. In the 30-UE case, although our proposed approach has a higher average end-to-end delay than the fully-offloading scheme, splitting the tasks decreases the loading of each queue  buffer and  results the lower queue length.
Let us zoom in on the excess queue value over the  99th percentile in the 30-UE case, the mean and standard deviation of exceedances, i.e., $\mathbb{E}[Q-q_{99}|Q>q_{99}]$ and $\sqrt{\mbox{Var}(Q-q_{99}|Q>q_{99})}$, are investigated  in Figs.~\ref{Fig:7}(d) and \ref{Fig:8}(d), where we can find that our approach has a smaller amount and more concentrated extent of the extreme events.
Although fully offloading tasks achieves lower delay in the sparse network, the partially-offloading scheme can lower the loading in the queue buffer. In practice, if the queue buffer is equipped with the finite-size storage, our proposed approach can  properly address the potential overflowing issue and achieve more reliable task computation.

Finally in Figs.~\ref{Fig:9} and \ref{Fig:10}, we show the advantage of our proposed UE-server association approach in the heterogeneous MEC architecture. As a  baseline, we consider the mechanism in which the UE accesses the MEC server with the highest RSS. Provided that the deployed MEC servers have different computation capabilities (in terms of CPU cores),
associating some UEs with the stronger-capability server but the lower RSS can properly balance the servers' loadings although the transmission rates are sacrificed. As a result, compared with the baseline, the waiting time for the server's computational resources and the end-to-end delay of the proposed association approach are alleviated. The advantage is more prominent in the dense network since there are more UEs waiting for the server's resources.
If the servers' computation capabilities are identical, associating the UE with the server with the lower RSS does not give a computation gain. 
Thus, the proposed approach and baseline  have  identical association outcome and delay performance irrespective of the network setting.
We further show the CDF of the wireless channel signal-to-noise ratio  (SNR), measured as $\frac{P_{i}^{\max}\mathbb{E}[h_{ij}]}{N_0W}$, between the  UE and its associated server in Fig.~\ref{Fig:10}. When the MEC servers are homogeneous, the proposed approach and baseline have  similar association results as mentioned above. This is verified by the 1\,dB gap at the 1st percentile of the SNR. Nevertheless, in the heterogeneous MEC architecture, the gap is 7\,dB at the 1st percentile of the SNR and reduces to 1\,dB until the 30th percentile.

\section{Conclusions}\label{Sec: conclude}

The goal of this work is to propose a new low-latency and reliable communication-computing system design for enabling mission-critical applications.
In this regard, the URLLC requirement has been formulated with respect to the threshold deviation probability of the task queue length. By leveraging extreme value theory, we have characterized the statistics of the threshold deviation event with a low occurrence probability and imposed another URLLC constraint on the high-oder statistics.
The studied problem has been cast as UEs' computation and communication power minimization subject to the URLLC constraints.
Furthermore, incorporating  techniques from Lyapunov stochastic optimization and matching theory, we have proposed a two-timescale framework for UE-server association, task offloading, and resource allocation.  UE-server association is formulated as a many-to-one matching game with externalities which is addressed, in the long timescale, via the notion of swap matching. 
In every time slot, each UE allocates the computation and communication resources, and splits the task arrivals for local computation and offloading. In parallel, each server schedules its multiple CPU cores to compute the UEs' offloaded tasks. Numerical results have shown  the effectiveness of characterizing the extreme queue length using extreme value theory. Partially offloading tasks provides more reliable task computation in contrast with the non-MEC and fully-offloading schemes.
When the servers have different computation capabilities, the proposed UE-server association approach achieves lower delay than the RSS-based baseline, particularly in denser networks.
 In our future work, we will investigate distributed machine learning techniques \cite{ParkAI} to further reduce latency and reliability.
\appendices

\section{Proof of Lemma \ref{lemma power}}
\label{Lem: opt pow}

We first express the Lagrangian of problem {\bf P1-2'} as
\begin{align}\label{Eq: Lagrangian}
&\big(\beta_{j^{*}i}(t)-\beta_{i}^{\rm O}(t)\big)\tau W\times\mathbb{E}_{\hat{I}_{ij^{*}}}\bigg[ \log_2\bigg(1+\frac{P_{i}(t)h_{ij^{*}}(t)}{N_0W+I_{ij^{*}}(t)}\bigg)\bigg]\notag
\\&\quad-  \frac{\beta_{i}^{\rm L}(t)\tau f_{i}(t)}{L_i}+\gamma\big( \kappa [f_i(t)]^3+P_{i}(t) -P_{i}^{\max}\big)\notag
 \\ &\qquad+V\big(\kappa [f_i(t)]^3+P_{i}(t)\big)-\alpha_1 f_{i}(t)-\alpha_2 P_{i}(t),
\end{align}
where $\gamma$, $\alpha_1$, and $\alpha_2$ are the Lagrange multipliers.
Taking the partial differentiations of \eqref{Eq: Lagrangian} with respect to $f_i$ and $P_i$, we have
\begin{align*}
\frac{\partial \eqref{Eq: Lagrangian}}{\partial f_i}&= -\frac{\beta_{i}^{\rm L}(t)\tau }{L_i}+3\kappa (V+\gamma) [f_i(t)]^2-\alpha_1,
%\label{Eq: Differentiation Lagrangen-1}
 %
\\\frac{\partial \eqref{Eq: Lagrangian}}{\partial P_i}&=
  \mathbb{E}_{I_{ij^{*}}}\bigg[\frac{\big(\beta_{j^{*}i}(t)-\beta_{i}^{\rm O}(t)\big)\tau Wh_{ij^{*}}}{(N_0W+I_{ij^{*}}+P_{i}h_{ij^{*}})\ln 2}\bigg] +V+\gamma-\alpha_2.
  %\label{Eq: Differentiation Lagrangen-2}
%
\end{align*}
Subsequently, since  {\bf P1-2'} is  a convex optimization problem, we apply the Karush--Kuhn--Tucker (KKT) conditions to derive the optimal solution in which the optimal CPU-cycle $f^{*}_i(t)$, optimal transmit power $P_i^*(t)$, and optimal Lagrange multipliers (i.e., $\gamma^*$, $\alpha_1^*$, and $\alpha_2^*$) satisfy
\begin{align}
 & f^{*}_i(t)=\sqrt{\frac{\alpha^*_1 + \frac{\beta_{i}^{\rm L}(t)\tau }{L_i}}{ 3\kappa(V+\gamma^*)}},\label{Eq: KKT-1}
\\&  \mathbb{E}_{I_{ij^{*}}}\bigg[\frac{\big(\beta_{i}^{\rm O}(t)-\beta_{j^{*}i}(t)\big)\tau W h_{ij^{*}}}{(N_0W+I_{ij^{*}}+P^{*}_{i}(t)h_{ij^{*}}(t))\ln 2}\bigg] =V +\gamma^*-\alpha_2^*,\label{Eq: KKT-2}
%
%\begin{cases}
\\&f^*_{i}(t)\geq 0,
\qquad\alpha_1^*\geq 0,
\qquad\alpha_1^*f^*_{i}(t)= 0,\label{Eq: KKT-3}
\\&P^*_{i}(t)\geq 0,
\qquad\alpha^*_2\geq 0,
\qquad\alpha^*_2P_{i}(t)= 0,\label{Eq: KKT-4}
%\end{cases}
\end{align}
and
\begin{align}\label{Eq: KKT-5}
\begin{cases}
 \kappa [f^*_i(t)]^3+P^*_{i}(t) -P_{i}^{\max}\leq 0,
\\\gamma^*\geq 0,
\\\gamma\big(\kappa [f^*_i(t)]^3+P^*_{i}(t) -P_{i}^{\max}\big)= 0.
\end{cases}
\end{align}
From \eqref{Eq: KKT-1} and \eqref{Eq: KKT-3}, we deduce 
\begin{align*}
f^{*}_i(t)=
\sqrt{\frac{
 \beta_{i}^{\rm L}(t)\tau }{ 3L_i\kappa(V+\gamma^{*})}}.
 \end{align*}
Additionally,  from \eqref{Eq: KKT-2} and \eqref{Eq: KKT-4}, we can find that if 
$$  \mathbb{E}_{I_{ij^{*}}}\Big[\frac{(\beta_{i}^{\rm O}(t)-\beta_{j^{*}i}(t))\tau W h_{ij^{*}}}{(N_0W+I_{ij^{*}})\ln 2}\Big]>V+\gamma^{*},$$
 we have a positive optimal transmit  power $P^{*}_{ij}>0$ which satisfies
\begin{align}\notag
  \mathbb{E}_{I_{ij^{*}}}\bigg[\frac{\big(\beta_{i}^{\rm O}(t)-\beta_{j^{*}i}(t)\big)\tau W h_{ij^{*}}}{(N_0W+I_{ij^{*}}+P^{*}_{i}h_{ij^{*}})\ln 2}\bigg]
 =V+\gamma^{*}.
\end{align}
Otherwise,  $P^{*}_{i}=0$. 
Moreover, note that $\gamma^{*}$ is 0 if $\kappa [f^{*}_i(t)]^3+P^{*}_{i}(t)< P_{i}^{\max}$. When $\gamma^{*}> 0$,
$\kappa [f^{*}_i(t)]^3+P^{*}_{i}(t)= P_{i}^{\max}$.

\bibliographystyle{IEEEtran}
\bibliography{reffog}

\end{document}